\renewcommand{\Function}[2]{%
	\csname ALG@cmd@\ALG@L @Function\endcsname{#1}{#2}%
	\def\jayden@currentfunction{#1}%
}
\newcommand{\funclabel}[1]{%
	\@bsphack
	\protected@write\@auxout{}{%
		\string\newlabel{#1}{{\jayden@currentfunction}{\thepage}}%
	}%
	\@esphack
}
\newtheorem{thm}{Theorem}[section]
\newtheorem{lem}[thm]{Lemma}
\newtheorem{prop}[thm]{Proposition}
\newtheorem{defn}{Definition}[section]
\def\BState{\State\hskip-\ALG@thistlm}
\title{OPERA: Reasoning about continuous common knowledge in asynchronous distributed systems}
\author[1]{\large Sang-Min Choi}
\author[1]{Jiho Park}
\author[1]{Quan Nguyen}
\author[1]{Andre Cronje}
\author[2]{Kiyoung Jang}
\author[2]{Hyunjoon Cheon}
\author[2]{Yo-Sub Han}
\author[1]{Byung-Ik Ahn}
\affil[1]{FANTOM Lab\\ FANTOM Foundation}
\affil[2]{Department of Computer Science\\ Yonsei University}
\begin{document}
\maketitle

\begin{abstract}
	This paper introduces a new family of consensus protocols, namely \emph{Lachesis-class} denoted by $\mathcal{L}$, for distributed networks with guaranteed Byzantine fault tolerance. Each Lachesis protocol $L$ in $\mathcal{L}$ has complete asynchrony, is leaderless, has no round robin, no proof-of-work, and has eventual consensus.
	
	The core concept of our technology is the \emph{OPERA chain}, generated by the Lachesis protocol. In the most general form, each node in Lachesis has a set of $k$ neighbours of most preference. When receiving transactions a node creates and shares an event block with all neighbours. Each event block is signed by the hashes of the creating node and its $k$ peers. The OPERA chain of the event blocks is a Directed Acyclic Graph (DAG); it guarantees practical Byzantine fault tolerance (pBFT). Our framework is then presented using Lamport timestamps and concurrent common knowledge.
	
	Further, we present an example of Lachesis consensus protocol $L_0$ of our framework. Our $L_0$ protocol can reach consensus upon 2/3 of all participants' agreement to an event block without any additional communication overhead. $L_0$ protocol relies on a cost function to identify $k$ peers and to generate the DAG-based OPERA chain. By creating a binary flag table that stores connection information and share information between blocks, Lachesis achieves consensus in fewer steps than pBFT protocol for consensus.
\end{abstract}

\keywords{Consensus algorithm \and Byzantine fault tolerance \and Lachesis protocol \and OPERA chain \and Lamport timestamp \and Main chain \and root \and Clotho \and Atropos }

\newpage
\pagenumbering{arabic} 
\tableofcontents 
\newpage
\section{Introduction}\label{ch:intro}

Beyond the success of cryptocurrencies, blockchain has recently emerged as a technology platform that offers secure decentralized consistent transaction ledgers and has powered innovations across domains including financial systems, supply chains and health care.
Despite the high demand in distributed ledger technology~\cite{bcbook15}, commercialization opportunities have been obstructed by long processing time for consensus, and high power consumption. These issues have been addressed in consensus algorithms such as \cite{algorand16, algorand17, sompolinsky2016spectre, PHANTOM08}. 

Distributed database systems often address \emph{Byzantine} fault tolerance~\cite{Lamport82} in which up to just under one-third of the participant nodes may be compromised. Consensus algorithms ensures the integrity of transactions between participants over a distributed network~\cite{Lamport82} and is equivalent to the proof of \emph{Byzantine} fault tolerance in distributed database systems~\cite{randomized03, paxos01}. 
Byzantine consensus is not guaranteed for deterministic, completely asynchronous system with unbounded delays~\cite{flp}. But achieving consensus is feasible for nondeterministic system with probability one. 

There are several approaches to consensus in distributed system.
The original Nakamoto consensus protocol in Bitcoin uses Proof of Work (PoW), which requires large amounts of computational work to generate the blocks by participants~\cite{bitcoin08}.  Alternative schemes such as Proof Of Stake (PoS)~\cite{ppcoin12,dpos14} have been proposed. PoS uses participants' stakes to generate the blocks respectively. 
Another approach utilizes directed acyclic graphs (DAG)~\cite{dagcoin15, sompolinsky2016spectre, PHANTOM08, PARSEC18, conflux18} to facilitate consensus. 

Examples of DAG-based consensus algorithms include Tangle~\cite{tangle17}, Byteball~\cite{byteball16}, and Hashgraph~\cite{hashgraph16}. Tangle selects the blocks to connect in the network utilizing accumulated weight of nonce and Monte Carlo Markov Chain (MCMC). Byteball generates a main chain from the DAG and reaches consensus through index information of the chain. Hashgraph connects each block from a node to another random node. Hashgraph searches whether 2/3 members can reach each block and provides a proof of Byzantine fault tolerance via graph search.

\subsection{Motivation}

Practical Byzantine Fault Tolerance (pBFT) allows all nodes to successfully reach an agreement for a block (information) when a Byzantine node exists \cite{Castro99}. In pBFT, consensus is reached once a created block is shared with other participants and the share information is shared with others again \cite{zyzzyva07, honey16}. After consensus is achieved, the block is added to the participants’ chains~\cite{Castro99, Blockmania18}.
Currently, it takes $O(N^4)$ for pBFT.

HashGraph~\cite{hashgraph16} proposes ``gossip about gossip" and virtual voting to reach consensus. There are several limitations with HashGraph.
First, the algorithm operates on a known network, which needs full awareness of all authoritative participants. Second, gossip propagation is slow and latency increases to $O(n)$ with $n$ participants. Third, it remains unclear whether virtual voting is faster than chain weight aka longest chain/proof of work concept. These issues are gossip problems and not consensus problems.

We are interested in a new approach to address the aforementioned issues in pBFT approaches \cite{Castro99,zyzzyva07, honey16} and HashGraph~\cite{hashgraph16}. Specifically, we propose a new consensus algorithm that addresses the following questions: (1) Can we reach local consensus in a $k$-cluster faster for some $k$?, (2) Can we make gossips faster such as using a broadcast based gossip subset?, (3) Can continuous common knowledge be used for consensus decisions with high probability? (4) Can complex decisions be reduced to binary value consensus?

In this paper, we propose a new approach that can quickly search for Byzantine nodes within the block DAG.
In particular, we introduce a new class of consensus protocols, namely Lachesis protocol denoted by $\mathcal{L}$. The core idea of Lachesis is to use a new DAG structure, the OPERA chain, which allows faster path search for consensus.
We then propose an example of the Lachesis protocol class, which is called the Lachesis protocol $L_0$.

\subsection{Generic framework of $\mathcal{L}$ Protocols}
We introduce a generic framework of Lachesis protocols, called $\mathcal{L}$. The basic idea of Lachesis protocol is a DAG-based asynchronous non-deterministic protocol that guarantees pBFT.
We propose OPERA chain --- a new DAG structure for faster consensus. Lachesis protocol generates each block asynchronously and the Lachesis algorithm achieves consensus by confirming how many nodes know the blocks using the OPERA chain.
Figure~\ref{fig:operachain} shows an example of OPERA chain constructed through a Lachesis protocol.
\begin{figure}[h] \centering
\includegraphics[height=7cm, width=1.0\columnwidth]{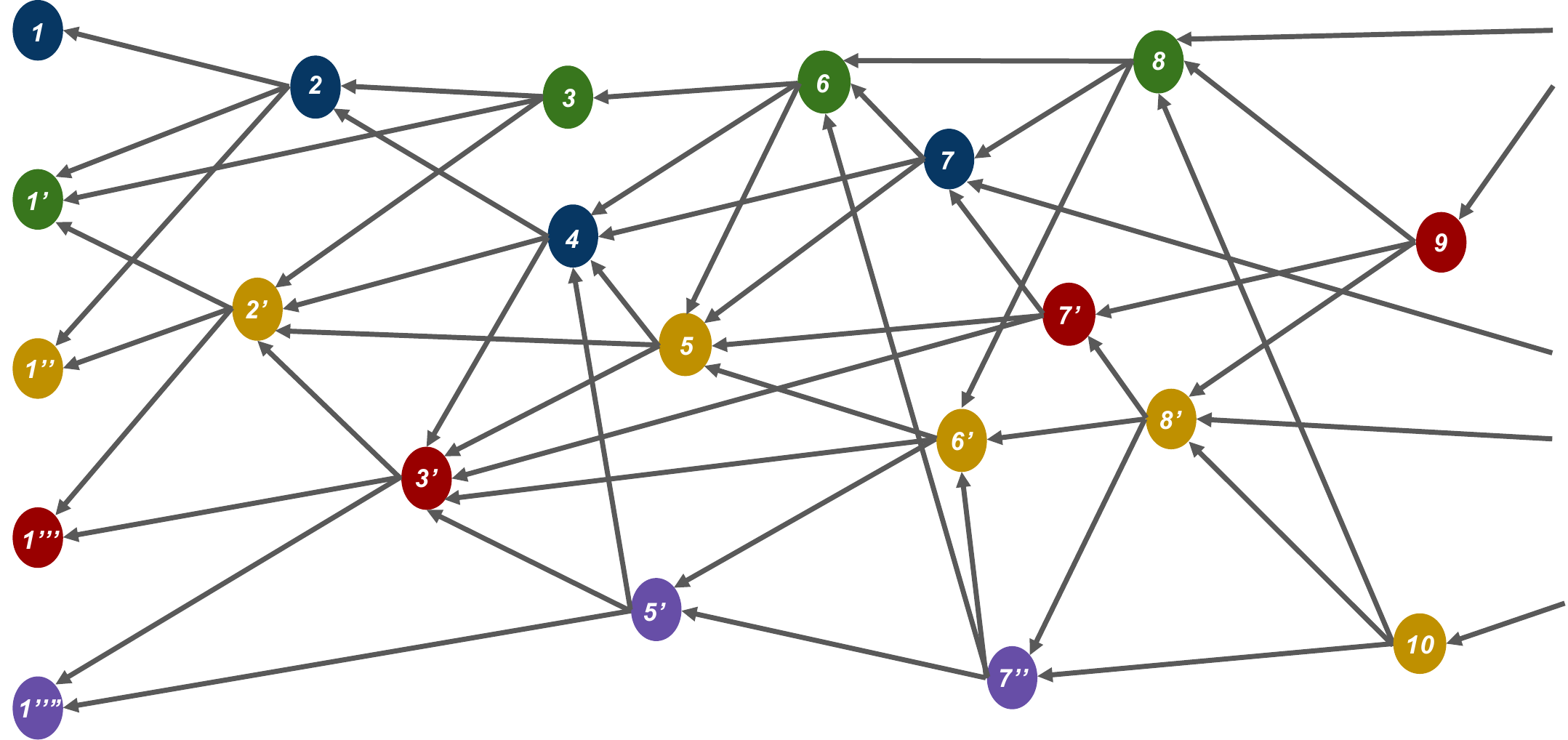}
\caption{An Example of OPERA Chain}
\label{fig:operachain}
\end{figure}

The main concepts of Lachesis are given as follows:
\begin{description}
\item[$\bullet$ Event block] All nodes can create event blocks as time $t$. The structure of an event block includes the signature, generation time, transaction history, and hash information to references. The information of the referenced event blocks can be copied by each node. The first event block of each node is called a \emph{leaf event}.
\item[$\bullet$ Lachesis protocol] Lachesis protocol is the rule-set to communicate between nodes. When each node creates event blocks, it determines which nodes choose other nodes to broadcast to. Node selection can be random or via some cost function.
\item[$\bullet$ Happened-before] Happened-before is the relationship between nodes which have event blocks. If there is a path from an event block $x$ to $y$, then $x$ Happened-before $y$. ``$x$ Happened-before $y$" means that the node creating $y$ knows event block $x$.
\item[$\bullet$ Root] An event block is called a \emph{root} if either (1) it is the first generated event block of a node, or (2) it can reach more than two-thirds of other roots. Every root can be candidate for Clotho.
\item[$\bullet$ Root set] Root set ($R_s$) is the set of all roots in the frame. The cardinality of the set is $2n/3 < R_s \leq$ $n$, where $n$ is the number of all nodes.  
\item[$\bullet$ Frame] Frame $f$ is a natural number that separates Root sets. The frame increases by 1 in case of a root in the new set ($f+1$). And all event blocks between the new set and the previous Root set are included in the frame $f$. 
\item[$\bullet$ Flag table] The Flag table stores reachability from an event block to another root. The sum of all reachabilities, namely all values in flag table, indicates the number of reacheabilities from an event block to other roots. 
\item[$\bullet$ Lamport timestamps] For topological ordering, Lamport timestamps algorithm uses the happened-before relation to determine a partial order of the whole event block based on logical clocks.
\item[$\bullet$ Clotho] A Clotho is a root that satisfies that they are known by more than $2n/3$ nodes and more than $2n/3$ nodes know the information that they are known in nodes. A Clotho can be a candidate for Atropos. 
\item[$\bullet$ Atropos] An Atropos is assigned consensus time through the Lachesis consensus algorithm and is utilized for determining the order between event blocks. Atropos blocks form a Main-chain, which allows time consensus ordering and responses to attacks.
\item[$\bullet$ Reselection] To solve the byzantine agreement problem, each node reselects a consensus time for a Clotho, based on the collected consensus time in the root set of the previous frame. When the consensus time reaches byzantine agreement, a Clotho is confirmed as an Atropos and is then used for time consensus ordering.
\item[$\bullet$ OPERA chain] The OPERA chain is the local view of the DAG held by each node, this local view is used to identify topological ordering, select Clotho, and create time consensus through Atropos selection.
\item[$\bullet$ Main-Chain] Main-chain is a core subset of the OPERA chain. It is comprised of Atropos event blocks. Thus, the OPERA chain uses Main-chain to find rapid ordering between event blocks. In OPERA chain, each event block is assigned a proper consensus position.
\end{description}

\begin{figure}[h] \centering
\includegraphics[height=7cm, width=1.0\columnwidth]{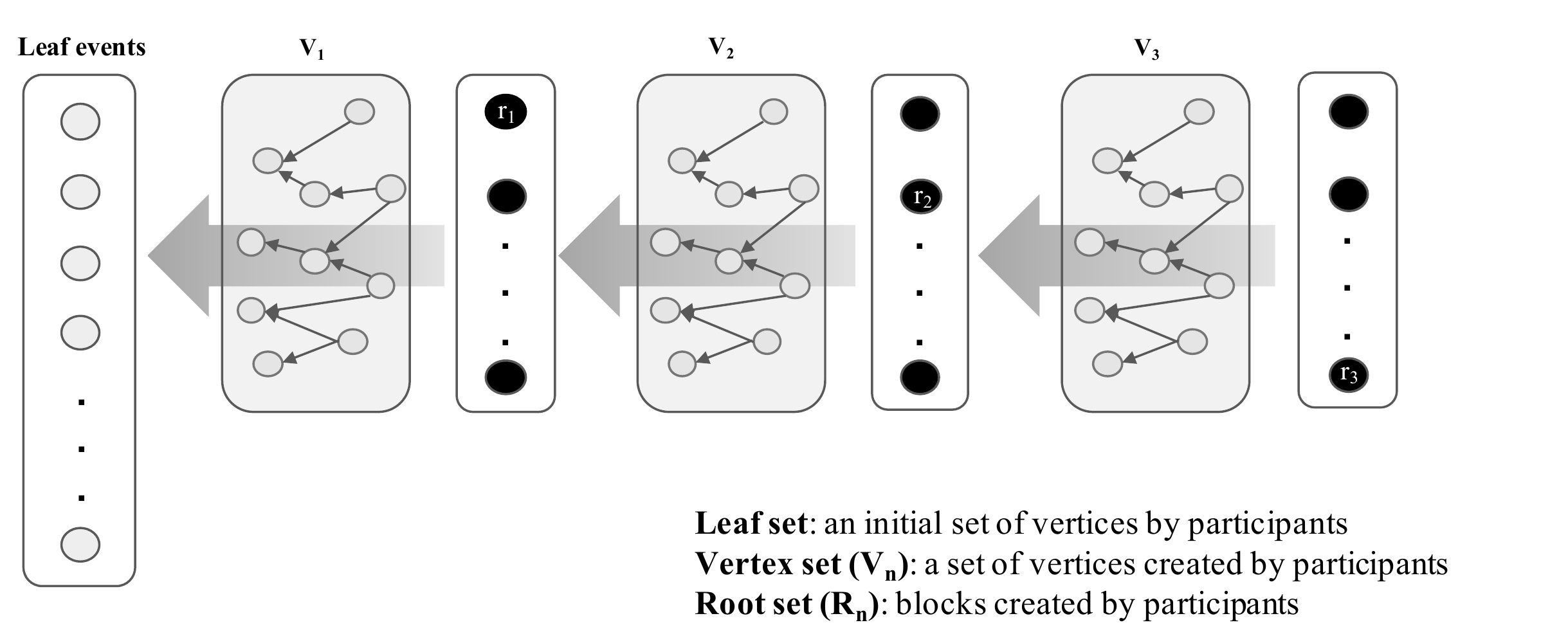}
\caption{Consensus Method through Path Search in a DAG (combines chain with consensus process of pBFT)}
\label{fig:pBFTtoPath}
\end{figure}
As a motivating example, Figure~\ref{fig:pBFTtoPath} illustrates how consensus is reached through the path search in the OPERA chain. In the figure, leaf set, denoted by $R_{s0}$, consists of the first event blocks created by individual participant nodes. $V$ is the set of event blocks that do not belong neither in $R_{s0}$ nor in any root set $R_{si}$.
Given a vertex $v$ in $V \cup R_{si}$, there exists a path from $v$ that can reach a leaf vertex $u$ in $R_{s0}$. 
Let $r_1$ and $r_2$ be root event blocks in root set $R_{s1}$ and $R_{s2}$, respectively.
$r_1$ is the block where a quorum or more blocks exist on a path that reaches a leaf event block. 
Every path from $r_1$ to a leaf vertex will contain a vertex in $V_1$. Thus, if there exists a vertex $r$ in $V_1$ such that $r$ is created by more than a quorum of participants, then $r$ is already included in $R_{s1}$. Likewise, $r_2$ is a block that can be reached for $R_{s1}$ including $r_1$ through blocks made by a quorum of participants.
For all leaf event blocks that could be reached by $r_1$, they are shared with more than quorum participants through the presence of $r_1$. The existence of the root $r_2$ shows that information of $r_1$ is shared with more than a quorum. 
This kind of a path search allows the chain to reach consensus in a similar manner as the pBFT consensus processes. It is essential to keep track of the blocks satisfying the pBFT consensus process for quicker path search; our OPERA chain and Main-chain keep track of these blocks.

\subsection{Lachesis protocol $L_0$}
We now introduce a new specific Lachesis consensus protocol, called $L_0$. The new protocol $L_0$ is a DAG-based asynchronous non-deterministic protocol that guarantees pBFT.
$L_0$ generates each block asynchronously and uses the OPERA chain for faster consensus by checking how many nodes know the blocks.

In this $L_0$ protocol, we propose several algorithms. In particular, we introduce an algorithm in which a node can identify lazy participants from cost-effective peers --- say its $k$ peers. We must stress that a generic Lachesis protocol does not depend on any $k$ peer selection algorithm; each node can choose $k$ peers randomly. Each message created by a node is then signed by the creating node and its $k$ peers. We also introduce a flag table data structure that stores connection information of event blocks. The flag table allows us to quickly traverse the OPERA chain to find reachability between event blocks.

OPERA chain can be used to optimize path search. 
By using certain event blocks (Root, Clotho, and Atropos), Main chain --- a core subgraph of OPERA chain, can maintain reliable information between event blocks and reach consensus.  Generating event blocks via Lachesis protocol, the OPERA chain and Main chain are updated frequently and can respond strongly to attack situations such as forking and parasite attack. Further, using the flag table over the OPERA chain, consensus can be quickly reached, and the ordering between specific event block can be determined.

\subsection{Contributions}

In summary, this paper makes the following contributions.

\begin{itemize}
\item We propose a new family $\mathcal{L}$ of Lachesis protocols. We introduce the OPERA chain and Main-chain for faster consensus.
\item We define a topological ordering of nodes and event blocks in the OPERA chain. By using Lamport timestamps, the ordering is more intuitive and reliable in distributed system. We introduce a flag table at each block to improve root detection.
\item We present proof of how a DAG-based protocol can implement concurrent common knowledge for consistent cuts.
\item The Lachesis protocols allow for faster node synchronization with $k$-neighbor broadcasts.

\item A specific Lachesis protocol $L_0$ is then introduced with specific algorithms. The benefits of Lachesis protocol $L_0$ include (1) root selection algorithm via flag table;  (2) an algorithm to build the Main-chain; (3) an algorithm for $k$ peers selection via cost function; (4) faster consensus selection via $k$ peer broadcasts; (5) data pruning via root creation. 

\end{itemize}

The rest of this paper is organised as follows. Section~\ref{se:Previous} gives an overview of Blockchain related work as well as existing DAG-based protocols.        
Section~\ref{se:protocol} describes our new Lachesis protocol. Section~\ref{se:lca} presents Lachesis consensus algorithm.
Several discussions about Lachesis protocols are presented in Section~\ref{se:discuss}. Section~\ref{se:con} concludes with some future work. Section~\ref{se:appendix}. Proof of Byzantine fault tolerance is described in Section~\ref{se:proof}. In Section~\ref{se:ra}, we present  responses to certain attacks with the Lachesis protocol and consensus algorithm.

\section{Related work}\label{se:Previous}

\subsection{Lamport timestamps}

Lamport~\cite{lamport1978time} defines the "happened before" relation between any pair of events in a distributed system of machines. The happened before relation, denoted by $\rightarrow$, is defined without using physical clocks to give a partial ordering of events in the system. The relation "$\rightarrow$" satisfies the following three conditions: (1) If $b$ and $b'$ are events in the same process, and $b$ comes before $b'$, then $b \rightarrow b'$. (2) If $b$ is the sending of a message by one process and $b'$ is the receipt of the same message by another process, then $b \rightarrow b'$. (3) If $b \rightarrow b'$ and $b' \rightarrow b''$ then $b \rightarrow b''$. 
Two distinct events $b$ and $b'$ are said to be concurrent if $b \nrightarrow b'$ and $b' \nrightarrow b$.

The happens before relation can be viewed as a causality effect: that $b \rightarrow  b'$ implies event $b$ may causally affect event $b'$. Two events are concurrent if neither can causally affect the other.

Lamport introduces logical clocks which is a way of assigning a number to an event. A clock $C_i$ for each process $P_i$ is a function which assigns a number $C_i(b)$ to any event $b \in P_i$. The entire system of blocks is represented by the function $C$ which assigns to any event $b$ the number $C(b)$, where $C(b) = C_j(b)$ if $b$ is an event in process $P_j$.
The Clock Condition states that for any events $b$, $b'$: if $b \rightarrow b'$ then $C(b)$ $<$ $C(b')$.

To satisfies the Clock Condition, the clocks must satisfy two conditions. First, each process $P_i$ increments $C_i$ between any two successive events. Second, we require that each message $m$ contains a timestamp $T_m$, which equals the time at which the message was sent. Upon receiving a message timestamped $T_m$, a process must advance its clock to be later than $T_m$.

Given any arbitrary total ordering $\prec$ of the processes, the total ordering $\Rightarrow$ is defined as follows: if $a$ is an event in process $P_i$ and $b$ is an event in process $P_j$, then $b \Rightarrow b'$ if and only if either (i) $C_i(b) < C_j(b')$ or (ii) $C(b)= Cj(b')$ and $P_i \prec P_j$. The Clock Condition implies that if $b \rightarrow b'$ then $b \Rightarrow b'$.

\subsection{Concurrent common knowledge}\label{se:cck}

In the Concurrent common knowledge (CCK) paper ~\cite{cck92}, they define a model to reason about the concurrent common knowledge in asynchronous, distributed systems. A system is composed of a set of processes that can communicate only by sending messages along a fixed set of channels. The network is not necessarily completely connected. The system is asynchronous in the sense that there is no global clock in the system, the relative speeds of processes are independent, and the delivery time of messages is finite but unbounded.

A local state of a process is denoted by $s^j_i$. Actions are state transformers; an action is a function from local states to local states. An action can be either: a send(m) action where m is a message, a receive(m) action, and an internal action.
A local history, $h_i$, of process $i$, is a (possibly infinite) sequence of alternating local states—beginning with a distinguished initial state—and actions. We write such a sequence as follows: 
$h_i = s_i^0 \xrightarrow{ \alpha_i^1 } s_i^1 \xrightarrow{\alpha_i^2} s_i^2 \xrightarrow{\alpha_i^3} ...$
The notation of $s^j_i$ ($\alpha^j_i$) refers to the $j$-th state (action) in process $i$'s local history
An event is a tuple $\langle s , \alpha, s' \rangle$ consisting of a state, an action, and a state.
The $j$th event in process $i$'s history is $e^j_i$ denoting $\langle s^{j-1}_i , \alpha^j_i, s^j_{i} \rangle$.

An asynchronous system consists of the following sets.
\begin{enumerate}
\item A set $P$ = \{1,...,$N$\} of process identifiers, where $N$ is the total number of processes in the system.
\item $A$ set $C$ $\subseteq$ \{($i$,$j$) s.t. $i,j \in P$\} of channels. The occurrence of $(i,j)$ in $C$ indicates that process $i$ can send messages to process $j$.
\item A set $H_i$ of possible local histories for each process $i$ in $P$.
\item A set $A$ of asynchronous runs. Each asynchronous run is a vector of local histories, one per process, indexed by process identifiers. Thus, we use the notation
$a = \langle h_1,h_2,h_3,...h_N \rangle$.
Constraints on the set $A$ are described throughout this section.
\item A set $M$ of messages. A message is a triple $\langle i,j,B \rangle$ where $i \in P$ is the sender of the message, $j \in P$ is the message recipient, and $B$ is the body of the message. $B$ can be either a special value (e.g. a tag to denote a special-purpose message), or some proposition about the run (e.g. “$i$ has reset variable $X$ to zero”), or both. We assume, for ease of exposition only, that messages are unique.
\end{enumerate}

The set of channels $C$ and our assumptions about their behavior induce two constraints on the runs in $A$. First, $i$ cannot send a message to $j$ unless $(i,j)$ is a channel. Second, if the reception of a message $m$ is in the run, then the sending of $m$ must also be in that run; this implies that the network cannot introduce spurious messages or alter messages.

The CCK model of an asynchronous system does not mention time. Events are ordered based on Lamport's happens-before relation. They use Lamport’s theory to describe global states of an asynchronous system. A global state of run $a$ is an $n$-vector of prefixes of local histories of $a$, one prefix per process.
The happens-before relation can be used to define a consistent global state, often termed a consistent cut, as follows. 

\begin{defn}[Consistent cut] A consistent cut of a run is any global state such that if $e^x_i \rightarrow e^y_j$ and $e^y_j$ is in the global state, then $e^x_i$ is also in the global state.
\end{defn} 

A message chain of an asynchronous run is a sequence of messages $m_1$, $m_2$, $m_3$, $\dots$, such that, for all $i$, $receive(m_i)$ $\rightarrow$  $send(m_{i+1})$. Consequently,
$send(m_1)$ $\rightarrow$ $receive(m_1)$ $\rightarrow$ $send(m_2)$ $\rightarrow$ $receive(m_2)$ $\rightarrow$ $send(m_3)$ $\dots$.

\subsection{Consensus algorithms}

In a consensus algorithm, all participant nodes of a distributed network share transactions and agree integrity of the shared transactions~\cite{Lamport82}. It is equivalent to the proof of Byzantine fault tolerance in distributed database systems~\cite{randomized03, paxos01}. 
The Practical Byzantine Fault Tolerance (pBFT) allows all nodes to successfully reach an agreement for a block when a Byzantine node exists \cite{Castro99}.

There are numerous consensus algorithms being proposed~\cite{algorand16, algorand17}. Proof of Work (PoW) requires
large amounts of computational work to generate the blocks~\cite{bitcoin08}.  Proof of Stake (PoS)~\cite{ppcoin12,dpos14} use participants' stakes and delegated participants' stake to generate the blocks respectively. Alternative schemes are proposed to improve algorithms using directed acyclic graphs (DAG)~\cite{dagcoin15}. These DAG-based approaches utilize the graph structures to decide consensus; blocks and connections are considered as vertices and edges, respectively.

\subsection{DAG-based Approaches}

IOTA~\cite{tangle17} published a DAG-based technology called Tangle. The Tips concept was used to address scalability issues with the limitations of the Internet of Things. Also, a nonce by using weight level was composed to achieve the transaction consensus by setting the user’s difficulty. To solve the double spending problem and parasite attack, they used the Markov Chain Monte Carlo (MCMC) tip selection algorithm, which randomly selects the tips based on the size of the accumulated transaction weights. However, if a transaction conflicts with another, there is still a need to examine all past transaction history to find the conflict.

Byteball~\cite{byteball16} uses an internal pay system called bytes. This is used to pay for adding data to the distributed database. Each storage unit is linked to each other that includes one or more hashes of earlier storage units. In particular, the consensus ordering is composed by selecting a single Main Chain, which is determined as a root consisting of the most roots. 
A majority of roots detects the double-spend attempts through consensus time of Main Chain. The fee is charged according to the size of the bytes, and the list of all units should be searched and updated in the process of determining the roots.

RaiBlocks~\cite{raiblock17} has been developed to improve high fees and slow transaction processing.  It is a process of obtaining consensus through the balance weighted vote on conflicting transactions. Each node participating in the network becomes the principal and manages its data history locally. However, since RaiBlocks generate transactions in a similar way to an anti-spam tool of PoW, all nodes must communicate to create transactions. In terms of scalability, there is a need for steps to verify the entire history of transactions when a new node is added.

Hashgraph~\cite{hashgraph16} is an asynchronous DAG-based distributed ledger. Each node is connected by its own ancestor and randomly communicates known events through a gossip protocol. At this time, any famous node can be determined by the \textit{see} and \text{strong see} relationship at each round to reach consensus quickly. They state that if more than 2/3 of the nodes reach consensus for an event, it will be assigned consensus position.

Conflux~\cite{conflux18} is a DAG-based Nakamoto consensus protocol. Conflux is a fast, scalable and decentralized block chain system that optimistically processes concurrent blocks without discarding any as forks. The Conflux protocol achieves consensus on a total order of the blocks.
The total order of the transactions is decided by all participants of the network. Conflux can tolerate up to half of the network as malicious while the BFT-based approaches can only tolerate up to one third of malicious nodes.

Parsec~\cite{PARSEC18} proposes an algorithm for reaching consensus in the presence of Byzantine faults in a randomly synchronous network. 
Like Hashgraph~\cite{hashgraph16}, it has no leaders, no round robin, no proof-of-work and reaches eventual consensus with probability one. Unlike Hashgraph, it can provide high speed even in the presence of faults. Parsec algorithm reaches BFT consensus with very weak synchrony assumptions. Messages are delivered with random delays, such that the average delay is finite. It allows up to one-third Byzantine (arbitrary) failures.

Phantom~\cite{PHANTOM08} is a PoW based protocol for a permissionless ledger that generalizes Nakamoto’s blockchain to a DAG of blocks. PHANTOM includes a parameter $k$ to adjust the tolerance level of the protocol to blocks that were created concurrently, which can be set to accommodate higher throughput. It thus avoids the security-scalability tradeoff as in Satoshi’s protocol.  PHANTOM uses a greedy algorithm on the DAG to distinguish between blocks by honest nodes and those by non-cooperating nodes. This distinction gives PHANTOM a robust total order of the blocks that is eventually agreed upon by all honest nodes.

Similar to PHANTOM, the GHOSTDAG protocol selects a $k$-cluster, which induces a colouring of the blocks as Blues (blocks in the selected cluster) and Reds (blocks outside the cluster).
However, instead of searching for the largest $k$-cluster, GHOSTDAG finds a cluster using a greedy algorithm.

Spectre~\cite{sompolinsky2016spectre} is
a new protocol for the consensus core of cryptocurrencies. SPECTRE, which is PoW-based protocol, relies on a data structure that generalizes Nakamoto’s blockchain into a DAG. It remains secure from attackers with up to 50\% of the computational power even under high throughput and fast confirmation times. Sprectre protocol satisfies weaker properties than classic consensus requires. In SPECTRE, the order between any two transactions can be decided from transactions performed by honest users. This is different from the conventional paradigm in which the order must be decided by all non-corrupt nodes.

Blockmania~\cite{Blockmania18}
is a mechanism to achieve consensus with several advantages over the more traditional pBFT protocol and its variants. In Blockmania nodes in a quorum only emit blocks linking to other blocks, irrespective of the consensus state machine. The resulting directed acyclic graph of blocks (block DAG) is later interpreted to ensure consensus safety, finality and liveliness.
The resulting system has communication complexity $O(N^2)$ even in the worse case, and low constant factors — as compared to $O(N^4)$ for pBFT.

\section{Generic framework of Lachesis Protocols}\label{se:protocol}

This section describes the key concepts of our new family of Lachesis protocols.

\subsection{OPERA chain}

The core idea of Lachesis protocols is to use a DAG-based structure, called OPERA chain for our consensus algorithm. 
In Lachesis protocol, a (participant) node is a server (machine) of the distributed system.
Each node can create messages, send messages to and receive messages from other nodes. The communication between nodes is asynchronous.

Lachesis Protocol consists of event blocks including user information and edges between event blocks. In Lachesis Protocol, event blocks are created by a node after the node communicates information of OPERA chain with another node. The OPERA chain is comprised of event blocks as vertices and block communication as edges.

Let $n$ be the number of participant nodes.
For consensus, the algorithm examines whether an event block is \emph{shared} with $2n/3$ nodes, where $n$ is the number of all nodes. Sharing an event block with $2n/3$ nodes means that more than two-thirds of all nodes in the OPERA chain knows the event block. 

\subsection{Main-chain}

For faster consensus, we introduce the \emph{Main-chain}, which is a special sub-graph of the OPERA chain. To improve path search, we propose to use a local hash table structure as a cache that is used to quickly determine the closest root to an event block.

In the OPERA chain, an event block is called a \emph{root} if the event block is linked to more than two-thirds of previous roots. A leaf vertex is also a root itself. With root event blocks, we can keep track of ``vital'' blocks that $2n/3$ of the network agree on.

The Main chain --- a core subgraph of OPERA chain, plays the important role for ordering the event blocks. The Main chain stores shortcuts to connect between the Atropos. 
After the topological ordering is computed over all event blocks through Lachesis protocol, Atropos blocks are determined and form the Main chain.  
Figure~\ref{fig:mainchain} shows an example of Main chain composed of Atropos event blocks. 
In particular, the Main chain consists of Atropos blocks those are derived from root blocks and so are agreed by $2n/3$ of the network nodes. Thus, this guarantees that at least $2n/3$ of nodes have come to consensus on this Main chain. 

\begin{figure} [H] \centering  
	\includegraphics[height=8cm, width=1.0\columnwidth]{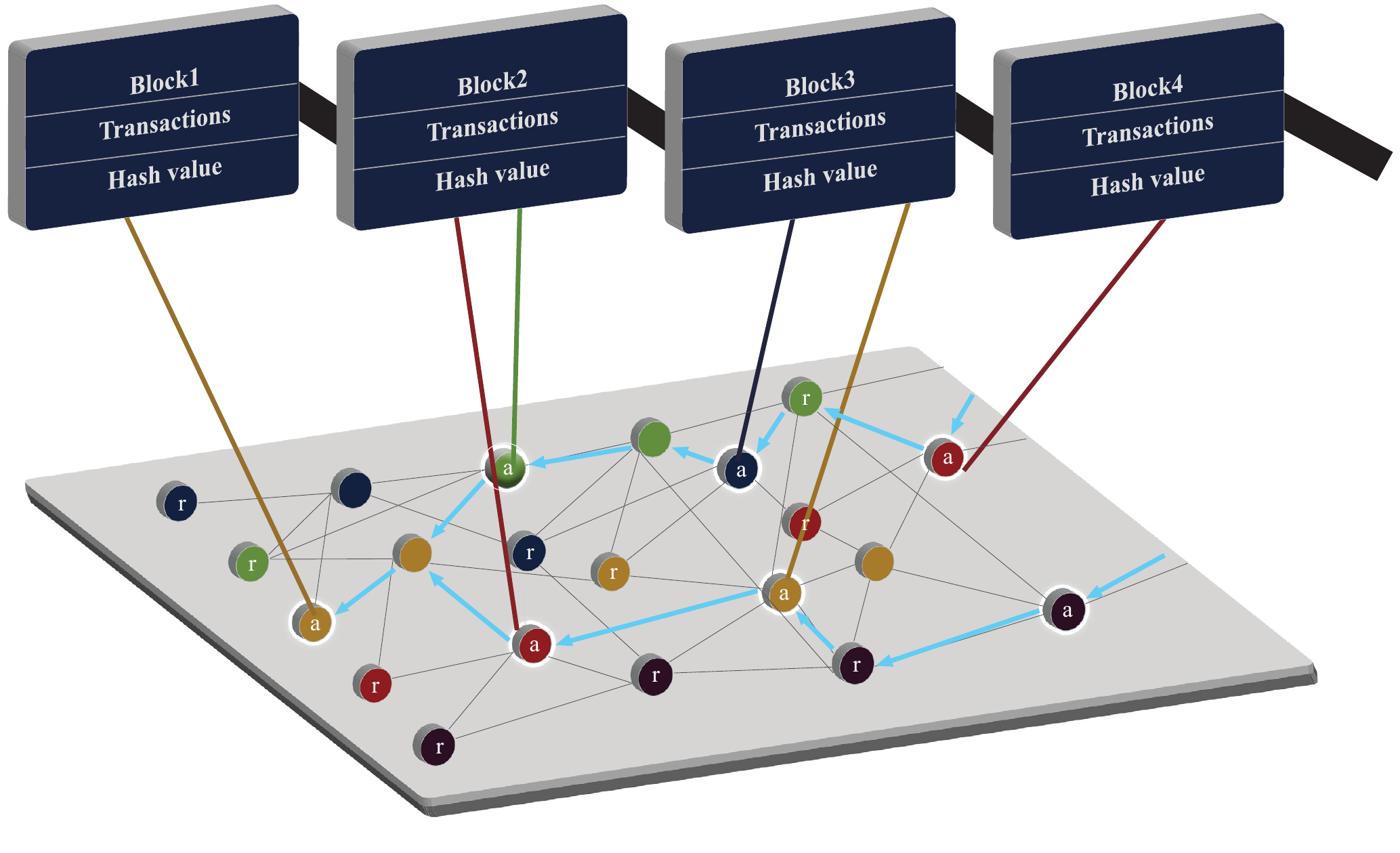}
	\caption{An Example of Main-chain}
	\label{fig:mainchain}
\end{figure}

Each participant node has a copy of the Main chain and can search consensus position of its own event blocks.
Each event block can compute its own consensus position by checking the nearest Atropos event block. Assigning and searching consensus position are introduced in the consensus time selection section.

The Main chain provides quick access to the previous transaction history to efficiently process new coming event blocks. From Main chain,  information about unknown participants or attackers can be easily viewed.
The Main chain can be used efficiently in transaction information management by providing quick access to new event blocks that have been agreed on by the majority of nodes. In short, the Main-chain gives the following advantages:

-	All event blocks or nodes do not need to store all information. It is efficient for data management.

-	Access to previous information is efficient and fast.

Based on these advantages, OPERA chain can respond strongly to efficient transaction treatment and attacks through its Main-chain.

\begin{algorithm}
	\caption{Main Procedure}\label{al:main}
	\begin{algorithmic}[1]
		\Procedure{Main Procedure}{}
		\BState \emph{loop}:
		\State A, B = $k$-node Selection algorithm()
		\State Request sync to node A and B
		\State Sync all known events by Lachesis protocol
		\State Event block creation
		\State (optional) Broadcast out the message
		\State Root selection
		\State Clotho selection
		\State Atropos time consensus
		\BState \emph{loop}:
		\State Request sync from a node
		\State Sync all known events by Lachesis protocol
		\EndProcedure
	\end{algorithmic}
\end{algorithm}

\subsection{Lachesis Consensus Algorithm (LCA)} 

Our Lachesis algorithm (LCA) is presented. LCA is one of the consensus algorithms for solving the byzantine agreement problem. 
In LCA, the OPERA chain uses root, Clotho and Atropos blocks to find consensus time for event blocks. Algorithm~\ref{al:main} shows the pseudo algorithm of a OPERA chain. The algorithm consists of two parts and runs them in parallel.

- In one part, each node requests synchronization and creates an event block. In line 3, a node runs the Node Selection Algorithm. The Node Selection Algorithm returns the $k$ IDs of other nodes to communicate with. In line 4 and 5, the node synchronizes the OPERA chain with other nodes. Line 6 runs the Event block creation, at which step the node creates an event block and checks whether it is root. Then the node broadcasts the created event block to other nodes in line 7. The step in this line is optional. In line 8 and 9, Clotho selection and Atropos time consensus algorithms are invoked. The algorithms determinte whether the specified root can be a Clotho, assign the consensus time, and then confirm the Atropos. 

- The second part is to respond to synchronization requests. In line 10 and 11, the node receives a synchronization request and then sends its response about the OPERA chain.

\subsection{Node Structure}
This section gives an overview of node structure in Lachesis.

Each node has a signature stamp, height vector, in-degree vector, flag table, root hash list, and Main-chain. Signature stamp is the data structure for storing the hash value that indicates the most recently created event block by the node. We call the most recently created event block the top event block. The flag table is a n dimensional vector. If an event block $e$ created by $i^{th}$ node can reach $j^{th}$ root, then the $j^{th}$ value in the flag table of $e$ becomes 1 (otherwise 0). Each node only maintains the flag table of the top event block.

\begin{figure} \centering  
	\includegraphics[width=.4\textwidth]{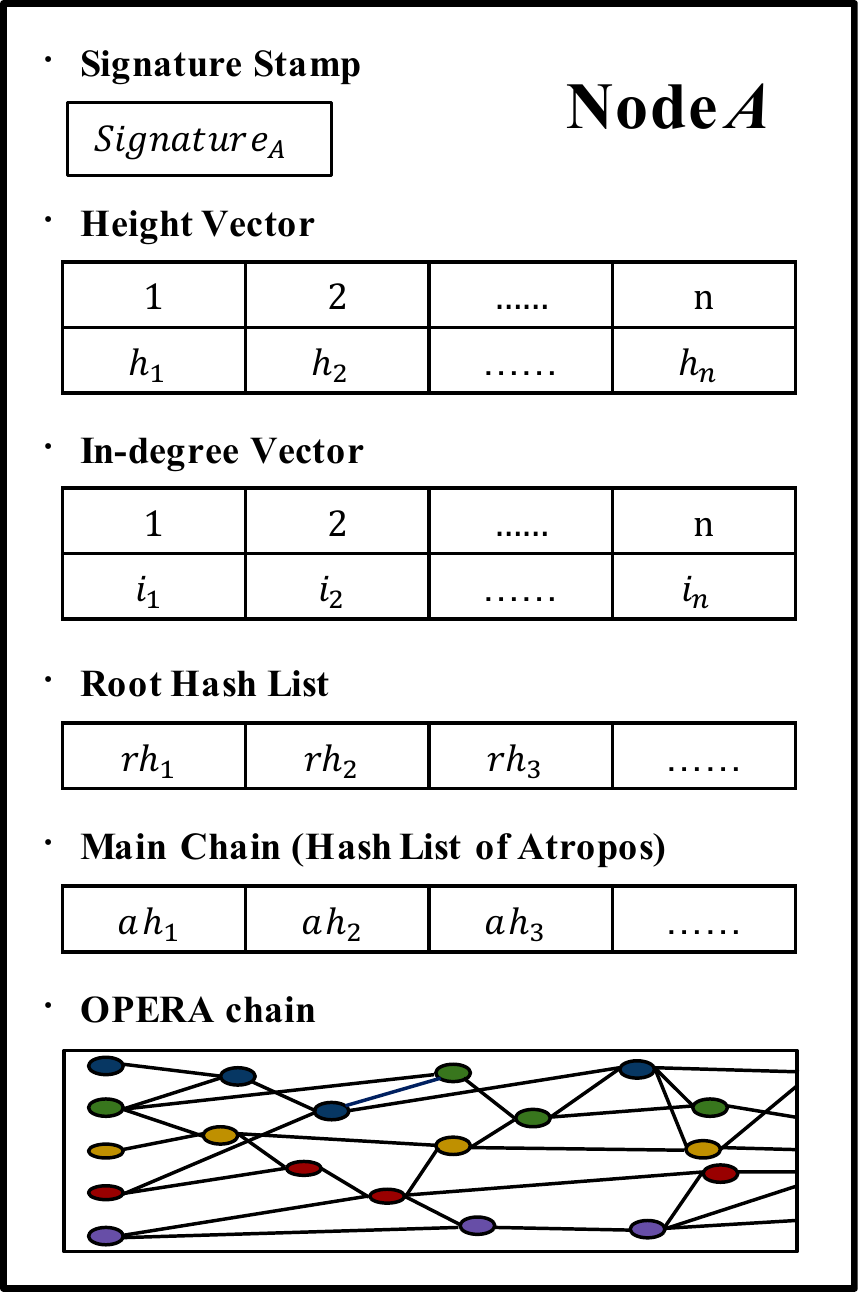}
	\caption{An Example of Node Structure}
	\label{fig:node}
\end{figure}

Figure~\ref{fig:node} shows an example of the node structure component of a node $A$. In the figure, the $signature_A$ stores the hash value of the top event block of $A$. Each value in the height vector is the number of event blocks created by other nodes respectively. The value of $h_i$ is the number of event blocks created by the $i^{th}$ node. Each value in the in-degree vector is the number of edges from other event blocks created by other nodes to the top event block. The root hash list is the data structure storing the hash values of the root. The Main-chain is a data structure storing hash values of the Atropos blocks. The Main-chain is used to find event blocks with complete consensus. The root, Clotho and Atropos selection algorithm are introduced in Section~\ref{se:lca}. 

\subsection{Event block creation}
In Lachesis protocol, every node can create an event block. Each event block refers to other event blocks. Reference means that the event block stores the hash values of the other event blocks. In a Lachesis protocol, an event block refers to $k$-neighbor event blocks under the conditions as follows: 

\begin{enumerate}
\item The reference event blocks are the top event blocks.
\item One reference should be made to a self-parent.
\item The own top event block refers to at least $k$-neighbor of other nodes.
\end{enumerate}

\subsection{Topological ordering of events using Lamport timestamps}
Every node has a physical clock and it needs physical time to create an even block. However, for consensus, Lachesis protocols relies on a logical clock for each node. For the purpose, we use \textit{"Lamport timestamps"} \cite{lamport1978time} to determine the time ordering between event blocks in a asynchronous distributed system.
\\
The Lamport timestamps algorithm is as follows:
\begin{enumerate}
\item Each node increments its count value before creating an event block.
\item When sending a message include its count value, receiver should consider which sender’s message is received and increments its count value.
\item If current counter is less than or equal to the received count value from another node, then the count value of the recipient is updated.  
\item If current counter is greater than the received count value from another node, then the current count value is updated.
\end{enumerate}

\begin{figure}\centering  
	\includegraphics[width=0.9\columnwidth]{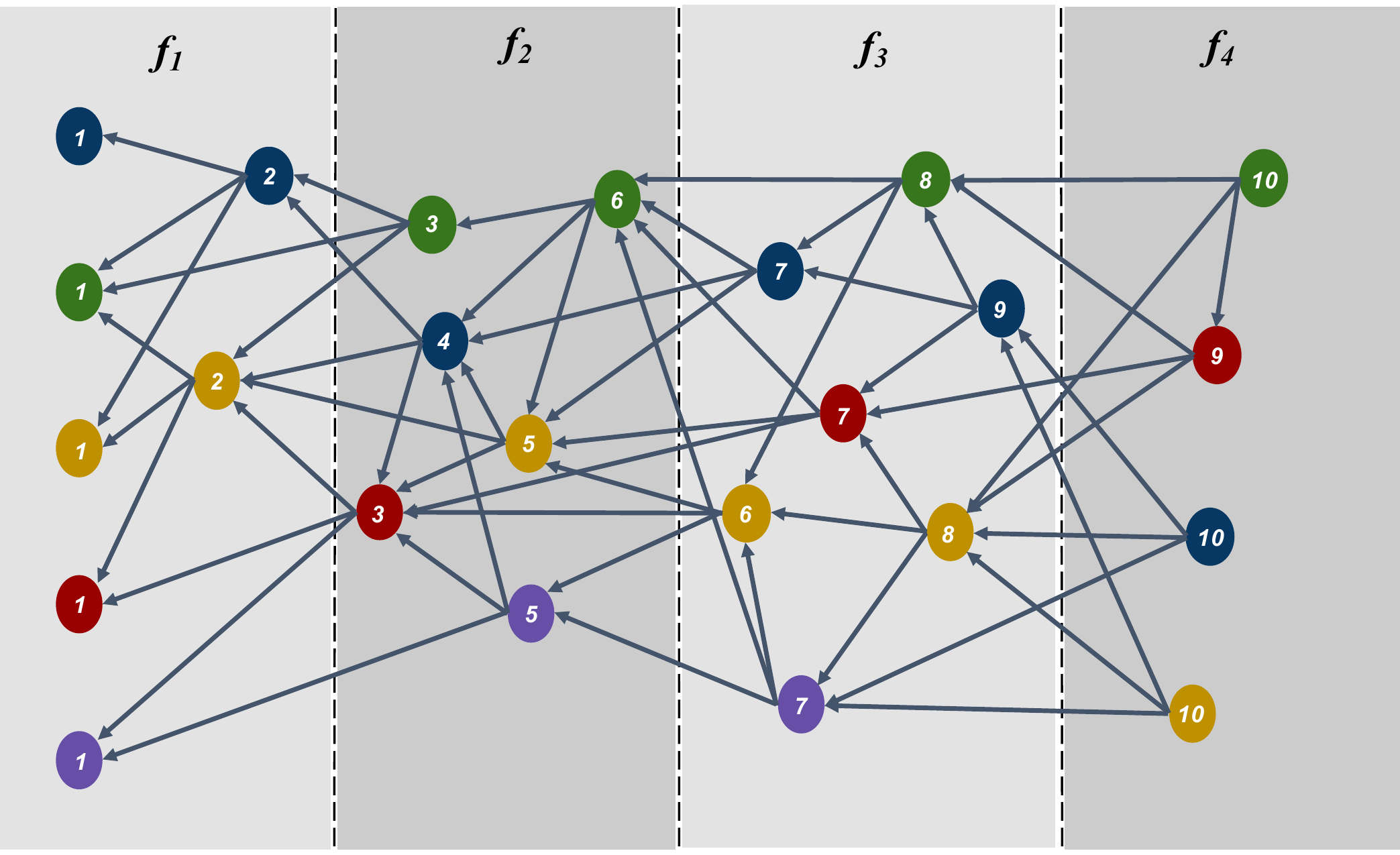}
	\caption{An example of Lamport timestamps}
	\label{fig:Lamport}
\end{figure}

We use the Lamport's algorithm to enforce a topological ordering of event blocks and uses it in Atropos selection algorithm. 

Since an event block is created based on logical time, the sequence between each event blocks is immediately determined. Because the Lamport timestamps algorithm gives a partial order of all events, the whole time ordering process can be used for Byzantine fault tolerance.

\subsection{Topological consensus ordering}
The sequential order of each event block is an important aspect for Byzantine fault tolerance. In order to determine the pre-and-post sequence between all event blocks, we use Atropos consensus time, Lamport timestamp algorithm and the hash value of the event block.

First, when each node creates event blocks, they have a logical timestamp based on Lamport timestamp. This means that they have a partial ordering between the relevant event blocks. 
Each Clotho has consensus time to the Atropos. This consensus time is computed based on the logical time nominated from other nodes at the time of the 2n/3 agreement.

In the LCA, each event block is based on the following three rules to reach an agreement:

\begin{enumerate}
\item If there are more than one Atropos with different times on the same frame, the event block with smaller consensus time has higher priority.
\item If there are more than one Atropos having any of the same consensus time on the same frame, determine the order based on the own logical time from Lamport timestamp.
\item When there are more than one Atropos having the same consensus time, if the local logical time is same, a smaller hash value is given priority through hash function.
\end{enumerate}

\begin{figure}[h] \centering  
	\includegraphics[height=8cm, width=1.0\columnwidth]{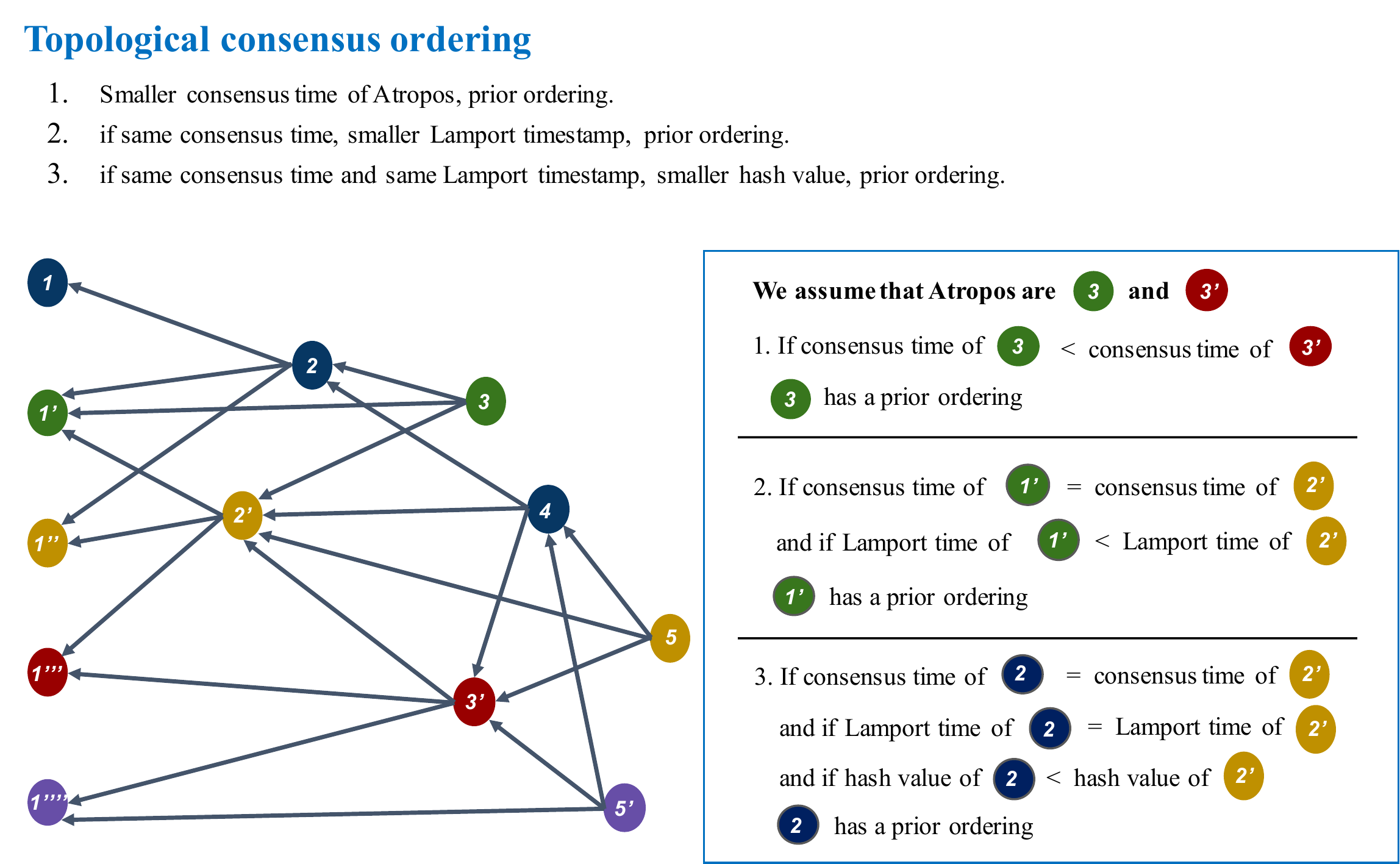}
	\caption{An example of topological consensus ordering}
	\label{fig:topological consensus ordering}
\end{figure}

Figure~\ref{fig:topological consensus ordering} depicts an example of topological consensus ordering.

\begin{figure} \centering  
	\includegraphics[height=8cm, width=1.0\columnwidth]{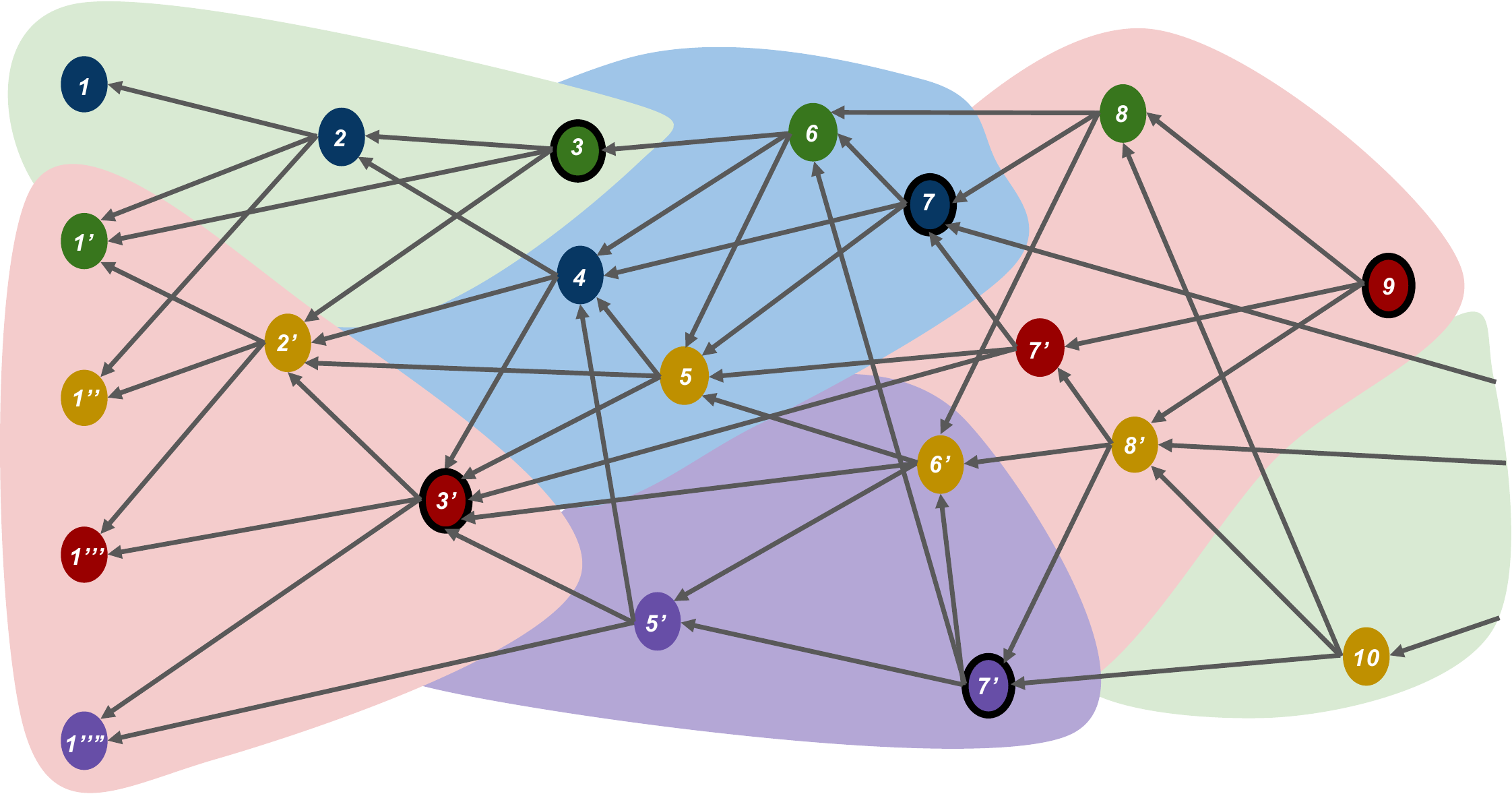}
	\caption{An Example of time ordering of event blocks in OPERA chain}
	\label{fig:sequence of operachain}
\end{figure}

Figure~\ref{fig:sequence of operachain} shows the part of OPERA chain in which the final consensus order is determined based on these 3 rules. The number represented by each event block is a logical time based on Lamport timestamp.

Final topological consensus order containing the event blocks based on agreement for the apropos. Based on each Atropos, they will have different colors depending on their range.

\subsection{Peer selection algorithm}
In order to create an event block, a node needs to select $k$ other nodes. Lachesis protocols does not depend on how peer nodes are selected. One simple approach can use a random selection from the pool of $n$ nodes. The other approach is to define some criteria or cost function to select other peers of a node. 

Within distributed system, a node can select other nodes with low communication costs, low network latency, high bandwidth and high successful transaction throughputs.

\section{Lachesis Consensus Protocol $L_0$}\label{se:lca}
This section presents our new Lachesis Consensus Protocol $L_0$, which is a specific example of the Lachesis class.
We describe the main ideas and algorithms used in the protocol.

\subsection{Root Selection}
All nodes can create event blocks and an event block can be a root when satisfying specific conditions. Not all event blocks can be roots. First, the first created event blocks are themselves roots. These leaf event blocks form the first root set $R_{S1}$. If there are total $n$ nodes and these nodes create the event blocks, then the cardinality of the first root set  $|R_{S1}|$ is $n$. Second, if an event block $e$ can reach at least 2n/3 roots, then $e$ is called a root. This event $e$ does not belong to $R_{S1}$, but the next root set $R_{S2}$. Thus, excluding the first root set, the range of cardinality of root set $R_{Sk}$ is $2n/3 < |R_{Sk}| \leq n$. The event blocks including $R_{Sk}$ before $R_{Sk+1}$ is in the frame $f_k$. The roots in $R_{Sk+1}$ does not belong to the frame $f_k$. Those are included in the frame $f_k+1$ when a root belonging to $R_{Sk+2}$ occurs.  

We introduce the use of a flag table to quickly determine whether a new event block becomes a root or not. Each node maintains a flag table of the top event block.
Every event block that is newly created is assigned $k$ hashes for its $k$ parent event blocks. We apply an $OR$ operation on the flag tables of the parent event blocks.

Figure~\ref{fig:ex_ft} shows an example of how to use flag tables to determine a root. In this example, $r_1$ is the most recently created event block. We apply an $OR$ operation on the flag tables of $r_1$'s $k$ parent event blocks. The result is the flag table of $r_1$.  If $r_1$'s flag table has more than $2n/3$ set bits, $r_1$ is a root. In this example, the number of set bits is 4, which is greater than $2n/3$ ($n$=5). Thus, $r_1$ becomes root.

The root selection algorithm is as follows:

\begin{figure} [t] \centering  
\includegraphics[height=8cm, width=1.0\columnwidth]{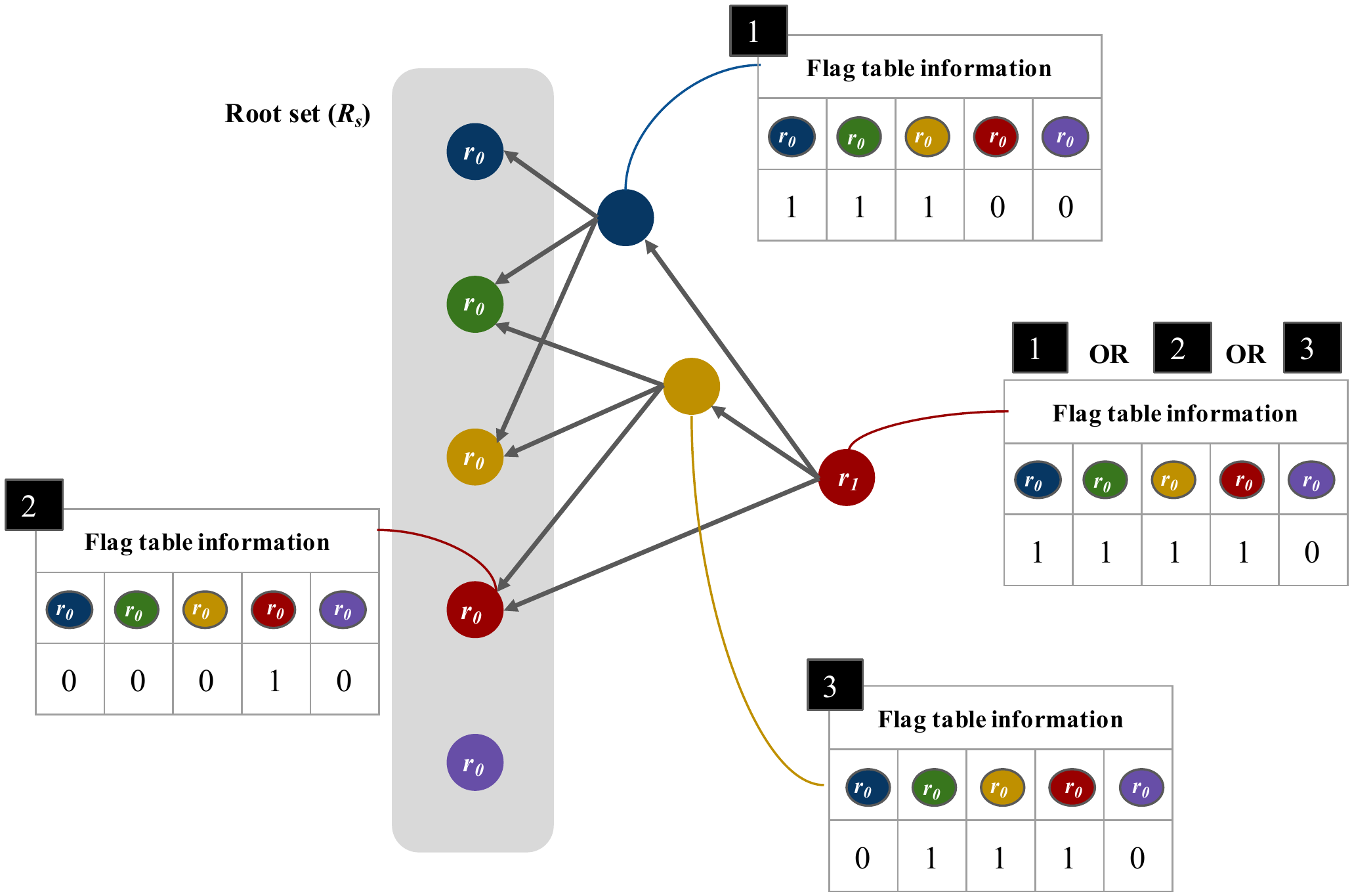}
\caption{An Example of Flag Table Calculation}
\label{fig:ex_ft}
\end{figure}

\begin{enumerate}
\item The first event blocks are considered as root. 
\item When a new event block is added in the OPERA chain, we check whether the event block is a root by applying $OR$ operation on the flag tables connected to the new event block. If the sum of the flag table for the new event block is more than 2n/3, the new event block becomes a root. 
\item When a new root appears on the OPERA chain, nodes update their root hash list. If one of new event blocks becomes a root, all nodes that share the new event block add the hash value of the event block to their root hash list.  
\item The new root set is created if the cardinality of previous root set $R_{Sp}$ is more than 2n/3 and the new event block can reach 2n/3 root in $R_{S_p}$.
\item When the new root set $R_{S_{k+1}}$ is created, the event blocks from previous root set $R_{S_k}$ to before $R_{S_{k+1}}$ belong to the frame $f_k$.
\end{enumerate}

\subsection{Clotho Selection}

A Clotho is a root that satisfies the Clotho creation conditions. 
Clotho creation conditions are that more than 2n/3 nodes know the root and a root knows this information. 

Figure~\ref{fig:Clotho} shows an example of Clotho. Circles with a label $r_i$ (or $c$) represents a root (or Clotho) event block. If there are three other sets of root and there exists one root after the recent clotho set, then one of the roots in the first root set become Clotho.

\begin{figure} [t] \centering  
\includegraphics[height=8cm, width=1.0\columnwidth]{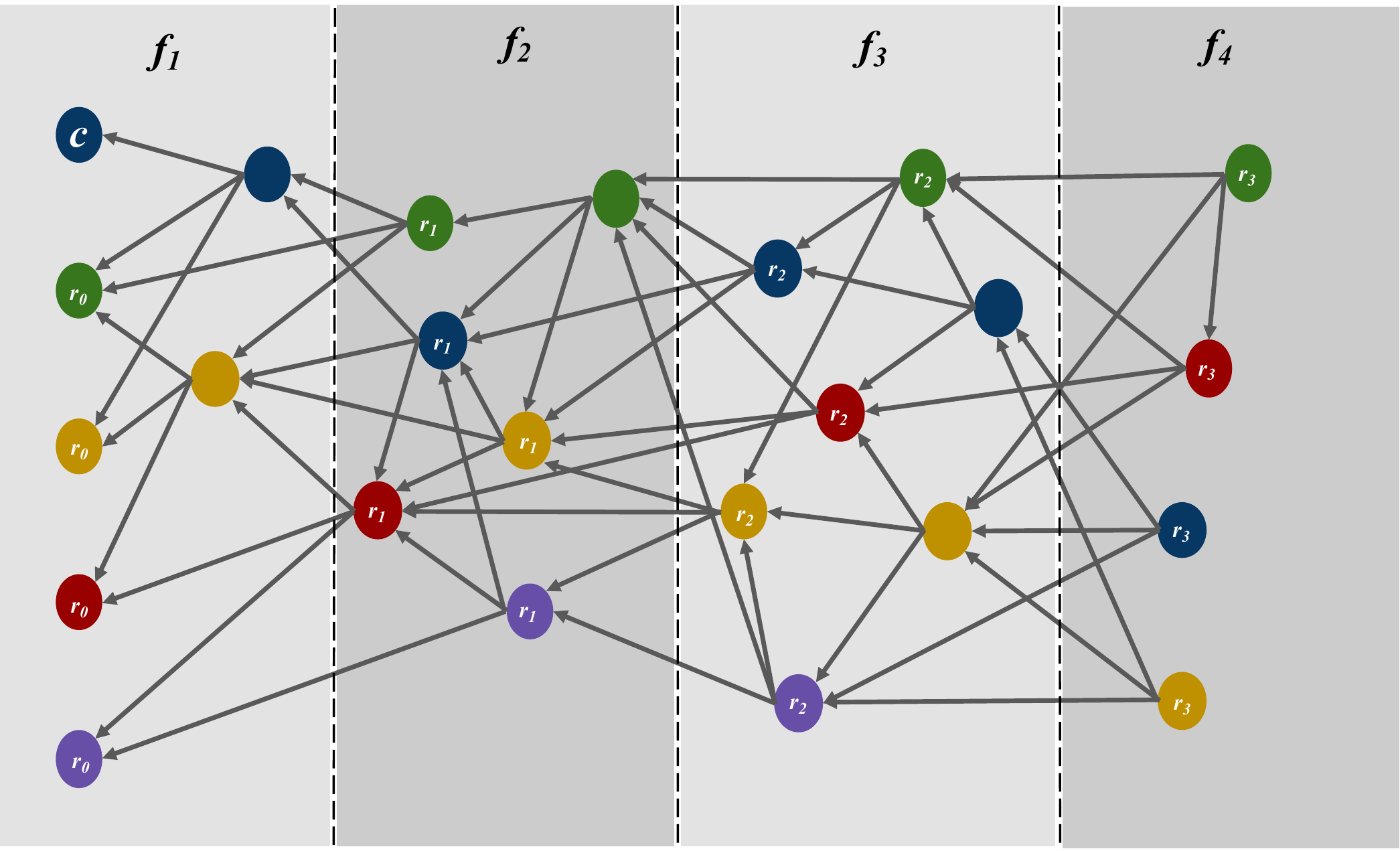}
\caption{An Example of Clotho}
\label{fig:Clotho}
\end{figure}

Clotho selection algorithm checks whether root event blocks in the root hash list satisfy the Clotho condition.
If a root satisfies Clotho condition, the root becomes Clotho and makes a candidate time for Atropos. After the root is concluded as a Clotho, Atropos consensus time selection algorithm is triggered. 

For a root $r$, we denote $frame(i, r)$ to be the root $r$ in $i$-th frame. For example, $frame(1, r)$ is the first root belong to the frame $f_1$. 

Algorithm~\ref{al:acs} shows the pseudo code for Clotho selection. The algorithm takes a root $r$ as input. 
Line 4 and 5 set $c.is\_clotho$ and $c.yes$ to $nil$ and 0 respectively. Line 6-8 checks whether any root $c'$ in $frame(i-3,r)$ shares $c$ where $i$ is the current frame. In line 9-10, if number of roots in $frame(i-2,r)$ which shares $c$ is more than $2n/3$, the root $c$ is set as a Clotho. The time complexity of Algorithm 3 is $O(n^{2})$, where $n$ is the number of nodes. 

\begin{algorithm}
\caption{Clotho Selection}\label{al:acs}
\begin{algorithmic}[1]
	\Procedure{Clotho Selection}{}
	\State \textbf{Input}: a root $r$
	\For{$c$ $\in$ $frame(i-3, r)$} 
	\State$c.is\_clotho$ $\leftarrow$ $nil$ 
	\State$c.yes$ $\leftarrow$ 0
	\For{$c'$ $\in$ $frame(i-2, r)$} 
	\If{$c'$ share $c$} 
	\State c.yes $\leftarrow$ c.yes + 1
	\EndIf
	\EndFor
	\If{$c.yes > 2n/3$}
	\State $c.is\_clotho$ $\leftarrow$ $yes$
	\EndIf
	\EndFor
	\EndProcedure
\end{algorithmic}
\end{algorithm}

\subsection{Atropos Selection}
Atropos selection algorithm is the process in which the candidate time generated from Clotho selection is shared with other nodes, and each root re-selects candidate time repeatedly until all nodes have same candidate time for a Clotho. 

After a Clotho is nominated, each node then computes candidate time of the Clotho. If there are more than two-thirds of the nodes that compute the same value for candidate time, that time value is recorded. Otherwise, each node reselects candidate time from some candidate time which the node collects. By the reselection process, each node reaches time consensus for candidate time of Clotho as OPERA chain grows. The candidate time reaching the consensus is called Atropos consensus time. After Atropos consensus time is computed,  Clotho is nominated to Atropos and each node stores the hash value of Atropos and Atropos consensus time in Main-Chain. The Main-chain is used for time order between event blocks. The proof of Atropos consensus time selection is shown in the section~\ref{se:proof}. 

\begin{algorithm}[H]
\caption{Atropos Consensus Time Selection}\label{al:atc}
\begin{algorithmic}[1]
	\Procedure{Atropos Consensus Time Selection}{}
	\State \textbf{Input}: $c.Clotho$ in frame $f_i$
	\State$c.consensus\_time$ $\leftarrow$ $nil$
	\State$m$ $\leftarrow$ the index of the last frame $f_m$
	\For{d from 3 to (m-i)}
	\State $R$ $\leftarrow$ be the Root set $R_{S_{i+d}}$ in frame $f_{i+d}$
	\For{$r$ $\in$ $R$} 
	\If{d is 3}
	\If{$r$ confirms $c$ as Clotho}
	\State $r.time(c)$ $\leftarrow$ $r.lamport\_time$
	\EndIf
	\ElsIf{d $>$ 3}
	\State s $\leftarrow$ the set of Root in $f_{j-1}$ that $r$ can share
	\State t $\leftarrow$ RESELECTION(s, $c$)
	\State k $\leftarrow$ the number of root having $t$ in $s$
	\If{d mod $h$ $>$ 0}
	\If{$k$ $>$ 2n/3}
	\State $c.consensus\_time$ $\leftarrow$ $t$
	\State $r.time(c)$ $\leftarrow$ $t$
	\Else
	\State $r.time(c)$ $\leftarrow$ $t$
	\EndIf
	\Else
	\State $r.time(c)$ $\leftarrow$ the minimum value in $s$
	\EndIf
	\EndIf
	\EndFor
	\EndFor
	\EndProcedure
\end{algorithmic}
\end{algorithm}

Algorithm~\ref{al:atc} and~\ref{al:resel} show pseudo code of Atropos consensus time selection and Consensus time reselection. In Algorithm~\ref{al:atc}, at line 6, $d$ saves the deference of relationship between root set of $c$ and $w$. Thus, line 8 means that $w$ is one of the elements in root set of the frame $f_{i+3}$, where the frame $f_i$ includes $c$. Line 10, each root in the frame $f_j$ selects own Lamport timestamp as candidate time of $c$ when they confirm root $c$ as Cltoho. In line 12, 13, and 14, $s$, $t$, and $k$ save the set of root that $w$ can share $c$, the result of $RESELECTION$ function, and the number of root in $s$ having $t$. Line 15 is checking whether there is a difference as much as $h$ between $i$ and $j$ where $h$ is a constant value for minimum selection frame. Line 16-20 is checking whether more than two-thirds of root in the frame $f_{j-1}$ nominate the same candidate time. If two-thirds of root in the frame $f_{j-1}$ nominate the same candidate time, the root $c$ is assigned consensus time as $t$. Line 22 is minimum selection frame. In minimum selection frame, minimum value of candidate time is selected to reach byzantine agreement. Algorithm~\ref{al:resel} operates in the middle of Algorithm~\ref{al:atc}. In Algorithm~\ref{al:resel}, input is a root set $W$ and output is a reselected candidate time. Line 4-5 computes the frequencies of each candidate time from all the roots in $W$. In line 6-11, a candidate time which is smallest time that is the most nomitated. The time complexity of Algorithm~\ref{al:resel} is $O(n)$ where $n$ is the number of nodes. Since Algorithm~\ref{al:atc} includes Algorithm~\ref{al:resel}, the time complexity of Algorithm~\ref{al:atc} is $O(n^2)$ where $n$ is the number of nodes.

\begin{algorithm} [H]
\caption{Consensus Time Reselection}\label{al:resel}
\begin{algorithmic}[1]
	\Function{Reselection}{} \funclabel{alg:a}
	\State \textbf{Input}: Root set $R$, and Clotho $c$
	\State \textbf{Output}: candidate time $t$
	\State $\tau$ $\leftarrow$ set of all $t_i = r.time(c)$ for all $r$ in $R$
	\State $D$  $\leftarrow$ set of tuples $(t_i, c_i)$ computed from $\tau$, where $c_i= count(t_i)$
	\State $max\_count$ $\leftarrow$ $max(c_i)$
	\State $t$ $\leftarrow$ $infinite$
	\For{tuple $(t_i, c_i)$ $\in$ $D$}
	\If{$max\_count$ $==$ $c_i$ $\&\&$ $t_i$ $<$ $t$}
	\State $t$ $\leftarrow$ $t_i$
	\EndIf
	\EndFor
	\State \textbf{return} $t$
	\EndFunction
\end{algorithmic}
\end{algorithm}

In the Atropos Consensus Time Selection algorithm, nodes reach consensus agreement about candidate time of a Clotho without additional communication (i.e., exchanging candidate time) with each other. Each node communicates with each other through the Lachesis protocol, the OPERA chain of all nodes grows up into same shape. This allows each node to know the candidate time of other nodes based on its OPERA chain and reach a consensus agreement. The proof that the agreement based on OPERA chain become agreement in action is shown in the section~\ref{se:proof}.\\

\subsection{Peer selection algorithm via Cost function}
We define three versions of the Cost Function ($C_F$). Version one is focused around updated information share and is discussed below. The other two versions are focused on root creation and consensus facilitation, these will be discussed in a following paper.

We define a Cost Function ($C_F$) for preventing the creation of lazy nodes. The lazy node is a node that has a lower work portion in the OPERA chain. When a node creates an event block, the node selects other nodes with low values outputs from the cost function and refers to the top event blocks of the reference nodes. An equation~(\ref{eq1}) of $C_F$ is as follows,

\begin{equation}\label{eq1}
C_{F} =I/H
\end{equation}

where $I$ and $H$ denote values of in-degree vector and height vector respectively. If the number of nodes with the lowest $C_F$ is more than $k$, one of the nodes is selected at random. The reason for selecting high $H$ is that we can expect a high possibility to create a root because the high $H$ indicates that the communication frequency of the node had more opportunities than others with low $H$. Otherwise, the nodes that have high $C_F$ (the case of $I$ $>$ $H$) have generated fewer event blocks than the nodes that have low $C_F$. Then we can judge that those kind of nodes are lazy. If we can detect whether a node is lazy based on  cost function, we can change the lazy nodes to other participants or remove them. 

\begin{figure}[htp] \centering  
\includegraphics[height=6cm]{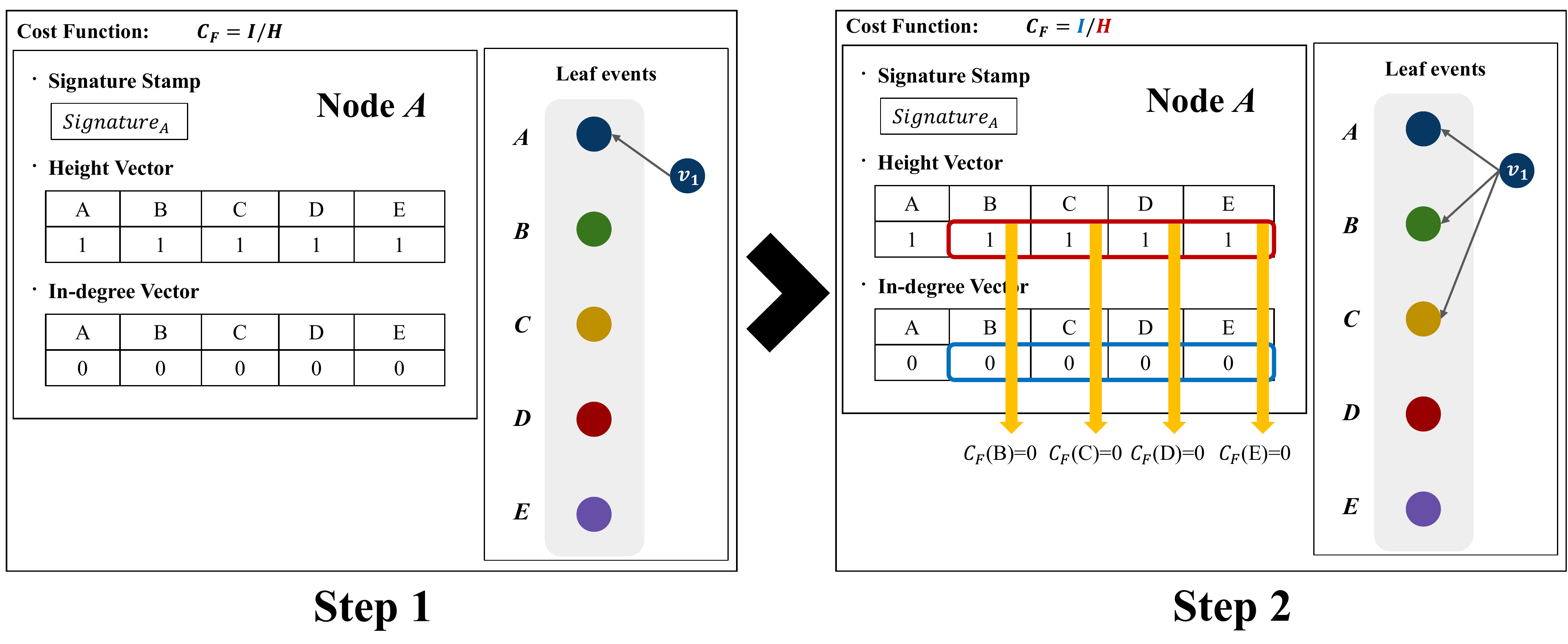}
\caption{An Example of Cost Function 1}
\label{fig:cfEx1}
\end{figure}

Figure~\ref{fig:cfEx1} shows an example of the node selection based on the cost function after the creation of leaf events by all nodes. In this example, there are five nodes and each node created leaf events. All nodes know other leaf events. Node $A$ creates an event block $v_1$ and $A$ calculates the cost functions. Step 2 in Figure~\ref{fig:cfEx1} shows the results of cost functions based on the height and in-degree vectors of node $A$. In the initial step, each value in the vectors are same because all nodes have only leaf events. Node $A$ randomly selects $k$ nodes and connects $v_1$ to the leaf events of selected nodes. In this example, we set $k$=3 and assume that node $A$ selects node $B$ and $C$. 

\begin{figure} \centering  
\includegraphics[height=7cm]{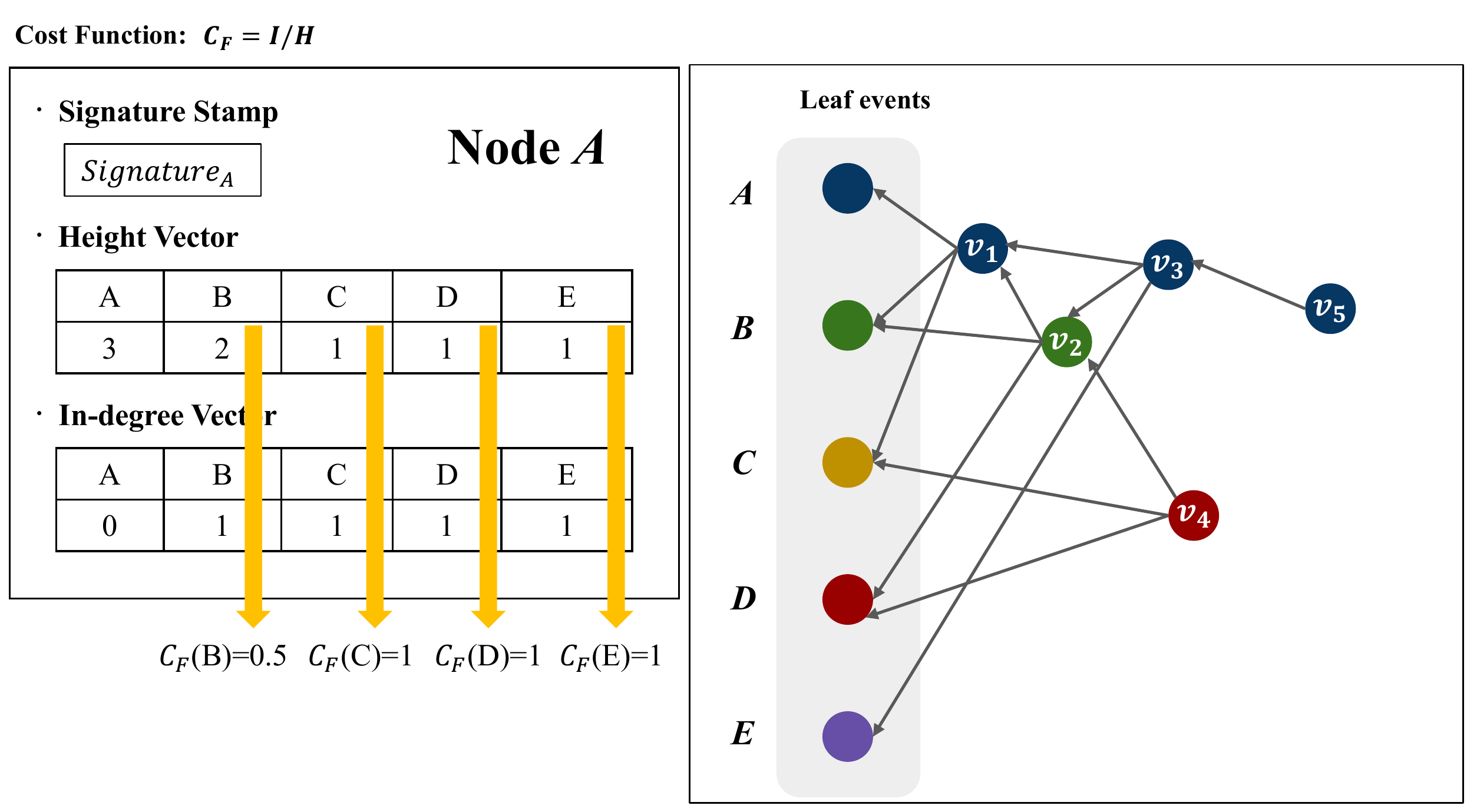}
\caption{An Example of Cost Function 2}
\label{fig:cfEx2}
\end{figure}

Figure~\ref{fig:cfEx2} shows an example of the node selection after a few steps of the simulation in Figure~\ref{fig:cfEx1}. In Figure~\ref{fig:cfEx2}, the recent event block is $v_5$ created by node $A$. Node $A$ calculates the cost function and selects the other two nodes that have the lowest results of the cost function. In this example, node $B$ has 0.5 as the result and other nodes have the same values. Because of this, node $A$ first selects node $B$ and randomly selects other nodes among nodes $C$, $D$, and $E$.

The height of node $D$ in the current OPERA chain of the example is 2 (leaf event and event block $v_4$). On the other hand, the height of node $D$ in node structure of $A$ is 1. Node $A$ is still not aware of the presence of the event block $v_4$. It means that there is no path from the event blocks created by node $A$ to the event block $v_4$. Thus, node $A$ has 1 as the height of node $D$. 

Algorithm~\ref{al:ns} shows the selecting algorithm for selecting reference nodes. The algorithm operates for each node to select a communication partner from other nodes. Line 4 and 5 set min\_cost and $S_{ref}$ to initial state. Line 7 calculates the cost function $c_f$ for each node. In line 8, 9, and 10, we find the minimum value of the cost function and set min\_cost and $S_{ref}$ to $c_f$ and the ID of each node respectively. Line 11 and 12 append the ID of each node to $S_{ref}$ if min\_cost equals $c_f$. Finally, line 13 selects randomly $k$ node IDs from $S_{ref}$ as communication partners. The time complexity of Algorithm 2 is $O(n)$, where $n$ is the number of nodes. 

\begin{algorithm}
\caption{$k$-neighbor Node Selection}\label{al:ns}
\begin{algorithmic}[1]
	\Procedure{$k$-node Selection}{}
	\State \textbf{Input:} Height Vector $H$, In-degree Vector $I$
	\State \textbf{Output:} reference node $ref$
	\State min\_cost $\leftarrow$ $INF$ 
	\State $s_{ref}$ $\leftarrow$ None
	\For{$k \in Node\_Set$}
	\State $c_f$ $\leftarrow$ $\frac{I_k}{H_k}$ 
	\If {min\_cost $>$ $c_f$} 
	\State min\_cost $\leftarrow$ $c_f$
	\State $s_{ref}$ $\leftarrow$ {k}
	\ElsIf {min\_cost $equa$l $c_f$}
	\State $s_{ref}$ $\leftarrow$ $s_{ref}$ $\cup$ $k$
	\EndIf
	\EndFor
	\State $ref$ $\leftarrow$ random select in $s_{ref}$
	\EndProcedure
\end{algorithmic}
\end{algorithm}

After the reference node is selected, each node communicates and shares information that is all event blocks known by them. A node creates an event block by referring to the top event block of the reference node. The Lachesis protocol works and communicates asynchronously. This allows a node to create an event block asynchronously even when another node creates an event block. The communication between nodes does not allow simultaneous communication with the same node. 

\begin{figure} \centering  
\includegraphics[height=5cm]{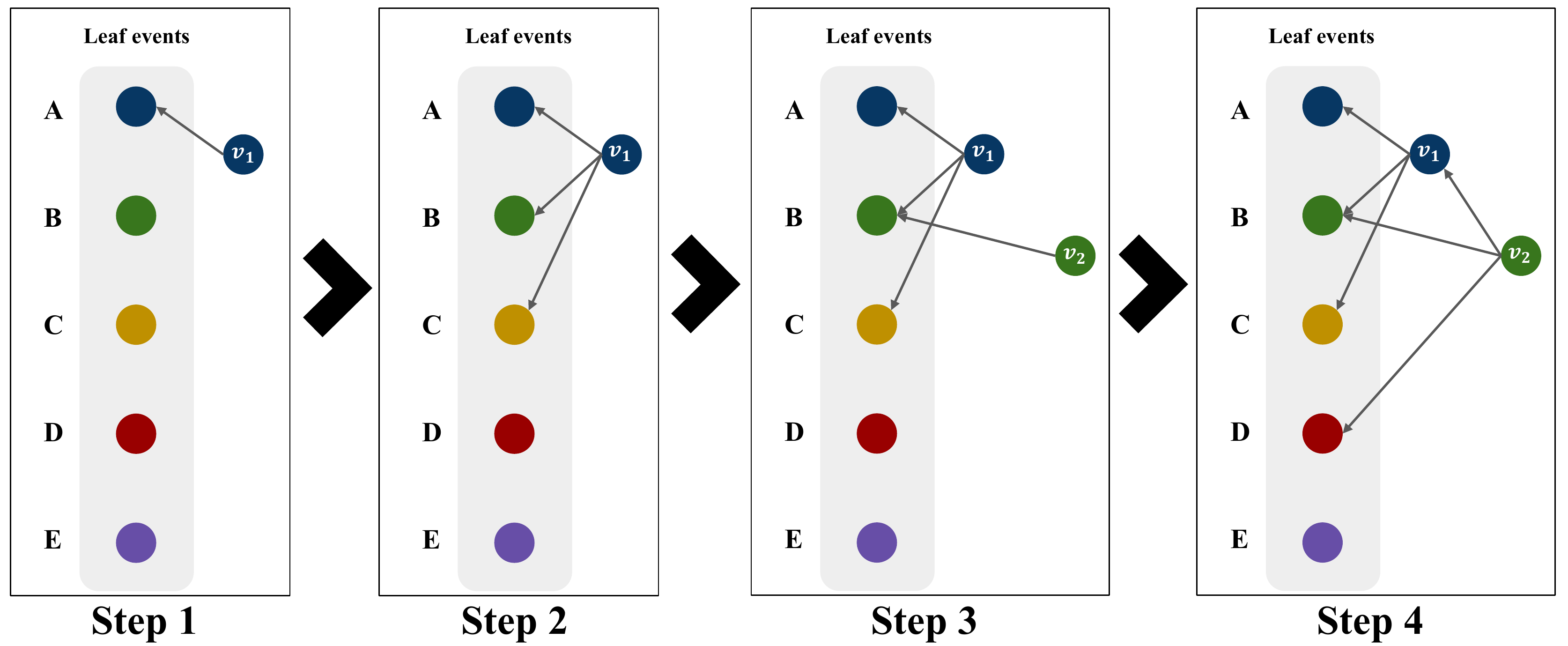}
\caption{An Example of Node Selection}
\label{fig:nsEx}
\end{figure}

Figure~\ref{fig:nsEx} shows an example of the node selection in Lachesis protocol. In this example, there are five nodes ($A, B, C, D,$ and $E$) and each node generates the first event blocks, called leaf events. All nodes share other leaf events with each other. In the first step, node $A$ generates new event block $v_1$ (\textit{blue}). Then node $A$ calculates the cost function to connect other nodes. In this initial situation, all nodes have one event block called leaf event, thus the height vector and the in-degree vector in node $A$ has same values. In other words, the heights of each node are 1 and in-degrees are 0. Because of this reason, node $A$ randomly select other two nodes and connect $v_1$ to the top two event blocks by other two nodes. The step 2 shows the situation after connections. In this example, node $A$ select node $B$ and $C$ to connect $v_1$ and the event block $v_1$ is connected to the top event blocks of node $B$ and $C$. Node $A$ only knows the situation of the step 2. 

After that, in the example, node $B$ generates new event block $v_2$ (\textit{green}) and also calculates the cost function. $B$ randomly select the other two nodes; $A$, and $D$, since $B$ only has information of the leaf events. Node $B$ requests to $A$ and $D$ to connect $e_B$, then nodes $A$ and $D$ send information for their top event blocks to node $B$ as response. The top event block of node $A$ is $v_1$ and node $D$ is the leaf event. The event block $v_2$ is connected to $v_1$ and leaf event from node $D$. Step 4 shows these connections.

\section{Discussions }\label{se:discuss}
This section presents several discussions on our Lachesis protocol.

\subsection{Lamport timestamps}
This section discusses a topological order of event blocks in DAG-based Lachesis protocols using Lamport timestamps~\cite{lamport1978time}.

Our Lachesis protocols relies on Lamport timestamps to define a topological ordering of event blocks in OPERA chain.
The ``happened before" relation, denoted by $\rightarrow$, gives a partial ordering of events from a distributed system of nodes. 

Given $n$ nodes, they are represented by $n$ processes $P = (P_0, P_1, \dots, P_{n-1})$. 

For a pair of event blocks $b$ and $b'$, the relation "$\rightarrow$" satisfies: (1) If $b$ and $b'$ are events of process $P_i$, and $b$ comes before $b'$, then $b \rightarrow b'$. (2) If $b$ is the send($m$) by one process and $b'$ is the receive($m$) by another process, then $b \rightarrow b'$. (3) If $b \rightarrow b'$ and $b' \rightarrow b''$ then $b \rightarrow b''$. 
Two distinct events $b$ and $b'$ are said to be concurrent if $b \nrightarrow b'$ and $b' \nrightarrow b$.

For an arbitrary total ordering $\prec$ of the processes, a relation $\Rightarrow$ is defined as follows: if $b$ is an event in process $P_i$ and $b'$ is an event in process $P_j$, then $b \Rightarrow b'$ if and only if either (i) $C_i(a) < C_j(b)$ or (ii) $C(b)= C_j(b')$ and $P_i \prec P_j$. This defines a total ordering, and that the Clock Condition implies that if $a \rightarrow b$ then $a \Rightarrow b$. 

We use this total ordering in our Lachesis algorithms. 
By using Lamport timestamps, we do not rely on physical locks to determine a partial ordering of events. 

\subsection{Semantics of Lachesis protocols}

This section discusses the possible usage of concurrent common knowledge, described in Section~\ref{se:cck} to understand DAG-based consensus protocols.

Let $G=(V, E)$ denote directed acyclic graph (DAG). $V$ is a set of vertices and $E$ is a set of edges. DAG is a directed graph with no cycle. Namely, in DAG, there is no path that source and destination at the same vertex. 
A path is a sequence of vertices ($v_1$, $v_2$, ..., $v_\textit{(k-1)}$, $v_k$) that uses no edge more than once.

An asynchronous system consists of the following sets.
\begin{enumerate}
\item A set $P$ = \{1,...,$n$\} of process identifiers, where $n$ is the total number of processes $P_i$ in the system.
\item A set $C$ $\subseteq$ \{($i$,$j$) s.t. $i,j \in P$\} of channels. If $(i,j)$ in $C$, it indicates that process $i$ can send messages to process $j$.
\item A set $H_i$ of possible local histories for each process $i$ in $P$.
\item A set $A$ of asynchronous runs. Each asynchronous run is a vector of local histories, denoted by $a = \langle h_1,h_2,h_3,...h_N \rangle$. Each process has a single run. Histories are indexed by process identifiers.
\item A set $M$ of messages. A message is a triple $\langle i,j,B \rangle$ where $i \in P$ is the sender of the message, $j \in $ is the message recipient, and $B$ is the message body.	
\end{enumerate}

In Lachesis protocol, each node selects $k$ other nodes as peers. 
For certain gossip protocol, nodes may be constrained to gossip with its $k$ peers. In such a case, the set of channels $C$ can be modelled as follows.
If node $i$ selects node $j$ as a peer, then $(i,j) \in C$. In general, one can express the history of each node in Lachesis protocol in the same manner as in the CCK paper~\cite{cck92}.
Thus, a proof of consensus can be formalized via the consistent cuts.

\section{Conclusion}\label{se:con}
In order to realize the distributed ledger technology, we have proposed a new family of asynchronous DAG-based consensus protocol, namely $\mathcal{L}$. 
We introduce the OPERA chain and Main-chain for faster consensus. By using Lamport timestamps, the topological ordering of event blocks in OPERA chain and Main chain is more intuitive and reliable in distributed system. We introduce a flag table at each block to improve root detection.

Further, we have presented a specific Lachesis consensus protocol, called $L_0$, as an example of $\mathcal{L}$. The $L_0$ protocol uses a new flag table in each block as a shortcut to check for reachability from an event block to a root along the OPERA chain. The path search is used as a proof of pBFT consensus. In terms of effectiveness, using flag table in $L_0$ protocol is more effective for consensus compared to the path searching approaches. To ensure the distribution of participating nodes, the Lachesis protocol defines a new cost function and an algorithm that efficiently and quickly selects peers. We also propose new algorithms for root selection and Clotho block selection based on the flag table; for Atropos selection by Weight after time consensus ordering.  

Based on the  $L_0$ protocol and the new consensus algorithm, the OPERA chain can protect against malicious attacks such as forks, double spending, parasite chains, and network control. These protections guarantee the safety of OPERA chain. We can also verify existence of Atropos with the OPERA chain. It concludes that the OPERA chain reaches consensus and guarantees liveliness. 
Finally, the time ordering ensures guarantee by weight value on the flag table. Based on these properties, the LCA provides a fair, transparent, and effective consensus algorithm.

\subsection{Future work}
There are a number of directions for future work:

\begin{itemize}
\item With the Lachesis protocols, we are investigating a fast node synchronization algorithm with $k$-neighbor broadcasts.
With OPERA chain and $k$ peer selection, it is possible to achieve a faster gossip broadcast. We are interested in comparing performance of different gossip strategies, such as randomized gossip, broadcast gossip and collection tree protocol for distributed averaging in wireless sensor networks.
\item We are also investigating the semantics of DAG-based protocols in general and Lachesis protocols in particular. We aim to have a formal proof of pBFT using concurrent common knowledge via consistent cuts. 
\end{itemize}

\newpage
\section{Appendix}\label{se:appendix}
\subsection{Proof of Lachesis Consensus Algorithm}\label{se:proof}
In this section, we provide proof of liveness and safety in OPERA chain and show the Byzantine fault tolerance. To represent the Byzantine fault tolerance, we assume that more than two-thirds of participants are reliable nodes. Based on the assumption, we provide some definitions, lemmas and theorems. Then, we eventually validate the Byzantine fault tolerance. 

\subsubsection{Preliminaries}
Let $G=(V, E)$ denote directed acyclic graph (DAG). $V$ is a set of vertices and $E$ is a set of edges. DAG is a directed graph with no cycle. Namely, in DAG, there is no path that source and destination at the same vertex. 
A path is a sequence $P$ of vertices ($v_1$, $v_2$, ..., $v_\textit{(k-1)}$, $v_k$) that uses no edge more than once. Suppose that we have a current vertex $v_c$ and current event block $e_c$ respectively.
A vertex $v_p$ is parent of $v_c$ if there is a path from $v_c$ to $v_p$ and the length of path is 1. 
A vertex $v_a$ is ancestor of $v_c$ if there is a path from $v_c$ to $v_a$ and the length of path is more than equal to 1.

\subsubsection{Proof of Byzantine Fault Tolerance for Lachesis Consensus Algorithm}
\begin{defn}[node]
The machine that participates in the OPERA chain and creates event blocks. The total number of nodes is $n$.
\end{defn}

\begin{defn}[event block]
In OPERA chain, we call a vertex an event block.
\end{defn}

\begin{defn}[self parent]
An event block $v_s$ is self parent of an event block $v_c$ if $v_s$ is parent of $v_c$ and both event blocks have same signatures. 
\end{defn}

\begin{defn}[self ancestor]
An event block $v_a$ is self ancestor of an event block $v_c$ if $v_a$ is ancestor of $v_c$ and both event blocks have same signatures. 
\end{defn}

\begin{defn}[Happened-Before]
An event block $v_x$ Happened-Before an event block $v_y$ if there is a path from $v_x$ to $v_y$.
\end{defn}

\begin{defn}[Root]
The first created event blocks (leaf events) become root or an event block $v$ that can reach more than 2n/3 other roots, becomes a root. 
\end{defn}

\begin{defn}[Root set]
All first event blocks (leaf events) are elements of root set $R_1$ ($|R_1|$ = $n$). And the root set $R_k$ is a set of roots such that $r_i$ $\in$ $R_k$ cannot reach more than 2n/3 other roots in $R_k$ $(k > 1)$.  
\end{defn}

\begin{defn}[Frame]
Frame $f$ is a natural number that separates Root sets. 
\end{defn} 

\begin{defn}[Clotho]
A root $r_k$ in the frame $f_{a+3}$ can nominate a root $r_a$ as Clotho if more than 2n/3 roots in the frame $f_{a+1}$ Happened-Before $r_a$ and $r_k$ Happened-Before the roots in the frame $f_{a+1}$.
\end{defn} 

\begin{defn}[Atropos]
If the consensus time of Clotho is validated, the Clotho become Atropos. 
\end{defn}

\begin{prop}
\label{prop:seen}
At least 2n/3 roots in the frame $f_i$ Happened-Before at least 2n/3 roots in the frame $f_{i+1}$. 
\end{prop}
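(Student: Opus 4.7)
My plan is to unfold the definitions of root and root set and then argue on the bipartite Happened-Before incidence between the two consecutive root sets. Write $p = |R_i|$ and $q = |R_{i+1}|$; by the root set definition both satisfy $p, q > 2n/3$. By the root definition, every $r' \in R_{i+1}$ reaches more than $2n/3$ roots in $R_i$, which, by the Happened-Before definition, is exactly the statement that more than $2n/3$ roots of $R_i$ Happened-Before $r'$. I would encode this as a bipartite incidence relation $\mathcal{H} \subseteq R_i \times R_{i+1}$ with $(r, r') \in \mathcal{H}$ iff $r$ Happened-Before $r'$; every column sum of $\mathcal{H}$ (indexed by $r' \in R_{i+1}$) exceeds $2n/3$, and hence $|\mathcal{H}| > (2n/3)\,q > (2n/3)^2$.

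\textbf{Reduction to a per-row bound.} Let $S = \{\, r \in R_i : r \text{ Happened-Before more than } 2n/3 \text{ roots in } R_{i+1} \,\}$. The symmetric direction of the proposition --- that more than $2n/3$ roots of $R_{i+1}$ each Happened-After more than $2n/3$ roots of $R_i$ --- is immediate, since all $q > 2n/3$ members of $R_{i+1}$ have this property by the root definition itself. The substantive direction is the dual one, $|S| \geq 2n/3$, and I would approach it by combining the edge lower bound $|\mathcal{H}| > (2n/3)^2$ with Byzantine quorum intersection: any two columns of $\mathcal{H}$ are subsets of $R_i$ of size greater than $2n/3$, so they share more than $2 \cdot 2n/3 - n = n/3$ common elements of $R_i$, and this pairwise overlap must compound as we range over all $q > 2n/3$ columns.

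\textbf{The hard part.} The delicate step is sharpening this information into a per-row guarantee. A plain row/column double count on $\mathcal{H}$ only forces the average row-degree to exceed $(2n/3)^2 / n = 4n/9$, which falls below the $2n/3$ threshold that would automatically place at least $2n/3$ rows into $S$, so averaging alone is insufficient. My plan to close the gap is to combine iterated quorum intersection with the Byzantine fault tolerance hypothesis stated at the start of Section~\ref{se:proof}: of the more than $2n/3$ members of $R_i$, more than $n/3$ are created by honest nodes and, via the $k$-peer gossip of the Lachesis protocol, are eventually referenced by every subsequent honest root; since more than $n/3$ of $R_{i+1}$ is likewise honest, these honest $R_i$-roots accumulate high row-degree in $\mathcal{H}$ and populate $S$. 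Making precise that this dissemination completes inside a single frame, rather than merely in the limit of infinitely many frames, is the ingredient I expect to have to work hardest on.
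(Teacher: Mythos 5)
Your setup coincides with the paper's own proof up to and including the edge count: the paper likewise counts at least $(2n/3)^2$ Happened-Before pairs between $R_{S_i}$ and $R_{S_{i+1}}$ and computes the average row-degree $4n/9$. Where you go further is in honestly flagging that this average falls short of the $2n/3$ threshold, so the per-row conclusion does not follow by counting alone. You are right, and this is a genuine gap --- but note that it is a gap you share with the paper, not one you introduced: the paper's proof silently retreats to the weaker conclusion that at least $2n/3$ roots of $f_i$ Happened-Before at least $n/3$ roots of $f_{i+1}$ (and even that weaker claim is not forced by the stated counting: with $|R_{S_i}|=n$ rows, total at least $4n^2/9$, and maximum row-degree $2n/3$, the count only guarantees about $n/3$ rows above the $n/3$ threshold). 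The $n/3$ form is what Proposition~\ref{prop:share} actually consumes downstream, so the pragmatic resolution is to prove and use the weaker statement rather than the proposition as literally worded.

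Your proposed repair will not close the gap either. Quorum intersection gives that any two columns of $\mathcal{H}$ overlap in more than $n/3$ rows, but pairwise overlaps of that size do not compound into a per-row bound: column sums exceeding $2n/3$ out of at most $n$ rows are entirely consistent with every single row having degree only about $2q/3 \approx 4n/9 < 2n/3$, so no purely combinatorial argument on $\mathcal{H}$ can yield $|S| \geq 2n/3$. Any complete proof of the $2n/3$-to-$2n/3$ statement must therefore import protocol-level assumptions (e.g., that honest nodes' roots are gossiped to all honest peers within one frame), which is exactly the step you identify as open and which neither you nor the paper establishes. Concretely: either downgrade the proposition to the $n/3$ version and give the (corrected) counting argument, or supply a lemma about intra-frame dissemination before invoking it here.
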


\begin{proof}
The number of roots in each root set is more than 2n/3. Since a root in the frame $f_{i+1}$ Happened-Before more than 2n/3 roots in the frame $f_i$, when the cardinalities of the root sets in the frames $f_i$ and $f_{i+1}$ are n and 2n/3 respectively, the number of paths from root set in the frame $f_{i+1}$ to root set in the frame $f_{i}$ is at least (2n/3)$^{2}$. The average and the maximum of the number of paths from root set in the frame $f_{i+1}$ to an root in the frame $f_{i}$ are (4n/9) and (2n/3) respectively. Thus, at least 2n/3 roots in the frame $f_{i}$ Happened-Before at least n/3 root in the frame $f_{i+1}$. 
\end{proof}

\begin{prop}
\label{prop:share}
If a root in the frame $f_{i}$ Happened-Before from more than n/3 roots in the frame $f_{i+1}$, the root Happened-Before all roots in the frame $f_{i+2}$.
\end{prop}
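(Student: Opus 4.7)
The plan is to prove the proposition by a pigeonhole argument on the root set of frame $f_{i+1}$, combined with transitivity of the Happened-Before relation. Let $r$ denote the root in $f_i$ hypothesized in the statement, and let $A \subseteq R_{S_{i+1}}$ be the set of roots in $f_{i+1}$ that have $r$ in their ancestor set (so that $r$ Happened-Before each element of $A$). By hypothesis, $|A| > n/3$.

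Next I would take an arbitrary root $r'$ in frame $f_{i+2}$ and invoke the definition of a root: $r'$ must reach more than $2n/3$ roots in the previous frame $f_{i+1}$, so there is a set $B \subseteq R_{S_{i+1}}$ with $|B| > 2n/3$ such that $r''$ Happened-Before $r'$ for each $r'' \in B$ (in the paper's path-orientation). Since $|R_{S_{i+1}}| \le n$ and $|A| + |B| > n/3 + 2n/3 = n$, the two subsets of $R_{S_{i+1}}$ must intersect. Picking any $r'' \in A \cap B$ gives a root in $f_{i+1}$ such that $r'$ reaches $r''$ via a path and $r''$ reaches $r$ via a path.

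Then I would close the argument by concatenating these two paths: by condition (3) of the happened-before relation (transitivity), $r$ Happened-Before $r'$. Since $r'$ was an arbitrary root of $f_{i+2}$, this establishes that $r$ Happened-Before every root of $f_{i+2}$, which is exactly the conclusion of the proposition.

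The main technical obstacle is purely bookkeeping: being careful about the direction of the Happened-Before relation as used in the paper (which takes paths from descendant to ancestor, the opposite of Lamport's original convention), and justifying the bound $|R_{S_{i+1}}| \le n$ so that the pigeonhole step $|A| + |B| > n$ actually forces $A \cap B \neq \emptyset$. The upper bound $|R_{S_{i+1}}| \le n$ follows from the Root set definition given earlier, where the cardinality of any root set is bounded above by the number of participant nodes $n$. Once these conventions are pinned down, the rest is a one-line counting argument plus transitivity, and no case analysis is required.
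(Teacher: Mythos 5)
Your proof is correct and follows essentially the same route as the paper's: both arguments combine the root definition (each root in $f_{i+2}$ reaches more than $2n/3$ roots in $f_{i+1}$) with the hypothesis (more than $n/3$ roots in $f_{i+1}$ reach $r$) and conclude by intersecting the two subsets of $R_{S_{i+1}}$ and applying transitivity. The only difference is presentational: you make the pigeonhole step $|A|+|B|>n\ge|R_{S_{i+1}}|$ and the path concatenation explicit, whereas the paper leaves both implicit, so your write-up is the more careful version of the same argument.
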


\begin{proof}
Based on the definition of Root, each root can reach more than 2n/3 other roots in the previous frame. It means that a root in the frame $f_{i+2}$ should have a number of paths more than 2n/3 to roots in the frame $f_{i+1}$. Thus, if a root $r$ in the frame $f_{i}$ Happened-Before more than n/3 roots in the frame $f_{i+1}$, all roots in the frame $f_{i+2}$ have path to the root $r$.
\end{proof}

\begin{lem}[Sharing]
\label{lem:share}
If a root $r_a$ in the frame $f_{a+3}$ is created, the root in the frame $f_{a+3}$ knows that more than 2n/3 roots in the frame $f_{a}$ become known by more than 2n/3 nodes.
\end{lem}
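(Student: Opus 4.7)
The plan is to chain together Propositions~\ref{prop:seen} and~\ref{prop:share} across three successive frames $f_{a+1}$, $f_{a+2}$, $f_{a+3}$, and then translate Happened-Before relationships into statements about what the root in $f_{a+3}$ ``knows''. Throughout, I will use that each root is created by a distinct node (so a set of roots of cardinality $m$ is known to at least $m$ nodes), together with the defining property that every root in frame $f_{i+1}$ Happened-Before more than $2n/3$ roots in frame $f_i$.

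First I would establish the ``first hop'' from $f_a$ to $f_{a+1}$. Applying Proposition~\ref{prop:seen} with $i=a$ yields a set $S \subseteq R_{a}$ of at least $2n/3$ roots in $f_a$ each of which is Happened-Before by at least $2n/3$ roots in $f_{a+1}$; in particular, each $r \in S$ is Happened-Before by strictly more than $n/3$ roots of $f_{a+1}$. Second, I would propagate this to $f_{a+2}$: Proposition~\ref{prop:share} applied to each $r \in S$ says that $r$ is Happened-Before every root in $f_{a+2}$. Since there are more than $2n/3$ roots in $R_{a+2}$ and each is produced by a distinct node, every $r \in S$ is known by more than $2n/3$ distinct nodes, namely the creators of the roots in $R_{a+2}$.

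Third, I would lift this knowledge up to the root in $f_{a+3}$. By the definition of Root, any root $r^\star \in R_{a+3}$ Happened-Before more than $2n/3$ roots of $R_{a+2}$; call this subset $T \subseteq R_{a+2}$. Since each element of $T$ has in its ancestor set all of $S$ (by Step~2), the node that created $r^\star$ locally observes, via its copy of the OPERA chain, the full subgraph witnessing that every $r \in S$ is an ancestor of every root in $T$. Hence the node creating $r^\star$ knows that the $|S|>2n/3$ roots in $f_a$ are each known by the $|T|>2n/3$ distinct node-creators of $T$, which is exactly the conclusion of the lemma.

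The main obstacle I anticipate is not the graph-theoretic counting, which is a direct consequence of the two propositions, but rather being careful about the semantic gap between ``$r$ Happened-Before $r^\star$'' (a structural property of the DAG) and ``the creator of $r^\star$ knows that $r$ is shared with $2n/3$ nodes'' (an epistemic statement in the CCK sense of Section~\ref{se:cck}). I would bridge this by noting that each node's local view contains all ancestors of its top event block, so once $r^\star$ is created, its creator's local history includes every root in $T$ together with their ancestor relations to $S$, making the sharing pattern directly observable, not inferred. A secondary subtlety is the double-counting in Step~1: I must be careful that ``at least $2n/3$ roots in $f_a$ are Happened-Before by at least $2n/3$ roots in $f_{a+1}$'' already follows from Proposition~\ref{prop:seen} without additional hypotheses on the cardinalities of $R_a$ and $R_{a+1}$, so I would simply cite that proposition rather than redo its pigeonhole argument.
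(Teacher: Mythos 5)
Your proposal is correct and takes essentially the same route as the paper: the paper's own proof is a one-sentence citation of Propositions~\ref{prop:seen} and~\ref{prop:share}, and your argument simply makes explicit the three-hop chain $f_a \to f_{a+1} \to f_{a+2} \to f_{a+3}$ and the bridge from the structural Happened-Before relation to the epistemic ``knows'' claim that the paper leaves implicit.
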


\begin{proof}
Based on propositions~\ref{prop:seen} and ~\ref{prop:share}, the root in the frame $f_{a+3}$ knows that more than 2n/3 roots in the frame $f_{a}$ become known by more than 2n/3 nodes. 
\end{proof}

\begin{lem}[Fork]
\label{lem:fork}
If the pair of event blocks ($x, y$) is a fork, roots happened-before at least one fork in OPERA chain. Therefore, they can know fork before becoming Clotho.
\end{lem}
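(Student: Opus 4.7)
The plan is to unpack the definition of a fork, then combine the Sharing Lemma (Lemma~\ref{lem:share}) with the Clotho criterion to show that by the time any event block on a forked history could be promoted to Clotho, at least one of its fork-mates is already visible to the roots that would endorse it. I would first make the fork notion precise using the self-parent and self-ancestor definitions given above: $(x,y)$ is a fork when $x$ and $y$ share the same creator and neither is a self-ancestor of the other. Because every event block must list its self-parent when it is created, the joint existence of $x$ and $y$ forces the creator to have gossiped both branches at some point, and each branch independently accumulates descendants in the DAG.

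Next I would trace how knowledge of each branch propagates through the frame hierarchy. Each event block that references $x$ (respectively $y$) as an ancestor is itself eventually referenced by later event blocks, so a branch's descendant set grows as the DAG grows. Invoking Propositions~\ref{prop:seen} and~\ref{prop:share}, once a branch is Happened-Before more than $n/3$ roots of some frame $f_i$, it is Happened-Before every root of $f_{i+2}$. Chaining this with Lemma~\ref{lem:share}, within three frames of either $x$ or $y$ first entering the DAG more than $2n/3$ roots in a common frame Happened-Before that branch, and this knowledge is itself common knowledge among later roots.

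I would then invoke the Clotho condition directly. For a root $r_a$ in frame $f_a$ to be confirmed as Clotho, roots in frames $f_{a+1}$, $f_{a+2}$, and $f_{a+3}$ must successively witness it, which by definition introduces a three-frame delay. Suppose without loss of generality that the history of $r_a$ extends the $x$-branch of the fork. By the propagation argument above applied to the $y$-branch, the very same roots of $f_{a+3}$ whose endorsement is required for Clotho status will also Happened-Before $y$ (and symmetrically when $r_a$ extends $y$). Consequently every such endorsing root simultaneously Happened-Before both members of the fork pair, so the fork is observable strictly before Clotho confirmation can be finalised.

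The main obstacle will be the adversarial case in which the malicious creator deliberately withholds one branch from most honest nodes for as long as possible, trying to keep the two branches in disjoint subgraphs until Clotho status is reached on one side. Here I would lean on the assumption that more than two-thirds of the participants are honest together with the $k$-neighbour reference rule: a branch cannot accumulate enough descendants to be carried into successive root sets without being gossiped through at least $2n/3$ honest peers, and once that happens Lemma~\ref{lem:share} forces visibility within three frames, which is precisely the horizon the Clotho rule requires. Making the notion of ``in circulation'' frame-counted rather than wall-clock and ruling out simultaneous Clotho candidacy on disjoint partitions will be the most delicate part of the argument.
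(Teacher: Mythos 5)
Your proposal circles the right ideas but leaves the decisive step unproven, and it is not the step the paper actually uses. The paper's own argument is a one-line quorum-intersection observation: for a Clotho to be created that reaches $x$ and another that reaches $y$, \emph{each} of $x$ and $y$ must first be shared with more than $2n/3$ of the $n$ nodes; any two such supermajorities overlap in more than $n/3$ nodes, and every node in the overlap sees both branches and hence the fork, so detection structurally precedes Clotho status. Your argument instead routes everything through the frame-propagation machinery (Propositions~\ref{prop:seen} and~\ref{prop:share}, Lemma~\ref{lem:share}), and the key assertion --- that ``the very same roots of $f_{a+3}$ whose endorsement is required for Clotho status will also Happened-Before $y$'' --- does not follow from those results. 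Those propositions only fire once the $y$-branch is already Happened-Before more than $n/3$ roots of \emph{some} frame, and nothing you prove pins down \emph{which} frame that is relative to $f_a,\dots,f_{a+3}$. The adversary controls the release schedule of the withheld branch, so the alignment you need between the $y$-branch's propagation window and the $x$-branch's three-frame endorsement window is exactly what must be established, not assumed.

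You correctly identify the withholding attack as the main obstacle, but the resolution you sketch (``a branch cannot accumulate enough descendants \dots without being gossiped through at least $2n/3$ honest peers'') is a restatement of the sharing requirement, not a closure of the gap. To close it you need the pigeonhole step: the $2n/3$-quorum that witnesses the $x$-branch and the $2n/3$-quorum that witnesses the $y$-branch intersect in more than $n/3$ nodes, independently of timing, so no ordering of releases lets both branches each reach Clotho-qualifying visibility while remaining in disjoint subgraphs. (If the $y$-branch is withheld forever, it never reaches the $2n/3$ threshold, never yields a conflicting Clotho, and the lemma's conclusion is not threatened.) Adding that intersection argument would both repair your proof and reduce it essentially to the paper's.
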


\begin{proof}
Suppose that a node creates two event blocks ($x, y$) and the event blocks are a fork. To create two Clotho that can reach each event block in the pair, the event blocks should be shared in more than 2n/3 nodes. Therefore, if there exist fork event blocks, the OPERA chain can structurally detect the fork before roots become Clotho.
\end{proof}

\begin{thm}
\label{thm:same}
All node grows up into same shape in OPERA chain.
\end{thm}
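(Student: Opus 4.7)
The plan is to interpret ``same shape'' as eventual structural equality: for any event block $e$ created by any node, every honest node's local OPERA chain eventually contains $e$ with identical parent references, so the DAGs held by any two honest nodes agree on every block that has been observed long enough. The argument is driven by the synchronization step of Algorithm~\ref{al:main} together with the hash-linked structure of event blocks.

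First, I would fix an arbitrary event block $e$ created by a node $P_c$ at some logical time. I would observe that each event block is uniquely identified by a content-addressed hash that includes the creator's signature and the hashes of its $k$ parents; hence, whenever two honest nodes both store $e$, they must store the same vertex with the same outgoing edges, because any bit-level disagreement would change the hash and produce a different block. This reduces the theorem to the claim that every honest node eventually stores $e$.

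Next, I would establish propagation by induction on the gossip rounds following the creation of $e$. In each round a node requests sync from its $k$ peers selected either randomly or by the cost function, and ``Sync all known events by Lachesis protocol'' transfers every event block known to one party but not the other. Treating the time-varying peer graph as a sequence of communication links, repeated application of the sync procedure makes the set of honest nodes holding $e$ monotonically nondecreasing, and under the asynchronous-but-finite-delivery assumption (messages are eventually delivered, channels are not spurious, cf.\ the CCK model in Section~\ref{se:cck}), it reaches the full set of honest participants in finitely many rounds. The same inductive argument applies uniformly to every block, so the union of event blocks held by any two honest nodes agrees in the limit.

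Finally, I would dispatch the Byzantine case using Lemma~\ref{lem:fork}: a malicious creator could attempt to split the view by publishing forked event blocks to different peers, but any such fork is structurally detected in the DAG before the conflicting blocks can be promoted to Clotho, so while both fork-halves may propagate, they do so as distinguishable vertices to \emph{all} honest nodes and do not cause the shapes of the local DAGs to diverge. The main obstacle I anticipate is the fact that this is a liveness-style statement about an asynchronous system: one must carefully state ``same shape'' as eventual equality of the block sets (and their induced edges) rather than instantaneous agreement, and one must justify that the sequence of $k$-neighbor choices yields a connected communication pattern among honest nodes in the long run, which is the only nontrivial assumption behind the inductive propagation step.
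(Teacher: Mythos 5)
Your proof is sound in spirit but takes a genuinely different route from the paper's. The paper argues by contradiction: it supposes two nodes $A$ and $B$ hold DAGs of different shape, picks two event blocks $x,y$ present in both local chains whose connecting paths differ, asserts that any such path discrepancy can only arise from a fork attack, and then invokes Lemma~\ref{lem:fork} to conclude that forks are detected and removed before Clotho selection --- contradiction. You instead build the claim constructively from two ingredients the paper leaves entirely implicit: (i) content-addressed hashing, which forces any block stored by two honest nodes to carry identical parent references, so local DAGs can differ only in \emph{which} blocks they contain, never in the edges around shared blocks; and (ii) eventual propagation through the sync step of Algorithm~\ref{al:main}, so the block sets coincide in the limit under the finite-delivery assumption of the CCK model (Section~\ref{se:cck}). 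You demote Lemma~\ref{lem:fork} to the Byzantine case only, and your reading of it --- fork halves propagate to everyone as distinguishable vertices rather than being ``removed'' --- is arguably more faithful to what that lemma actually establishes. What your route buys is a precise meaning for ``same shape'' (eventual equality of block sets and induced edges rather than instantaneous agreement) and an explicit statement of the liveness assumption (long-run connectivity of the honest $k$-neighbor gossip pattern) that the paper's argument silently requires; what the paper's route buys is brevity, at the cost of resting on the unproven assertion that every path discrepancy between two honest views must be a fork, which your hash-immutability observation actually justifies better than the paper does.
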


\begin{proof}
Suppose that each node $A$ and $B$ will have a different shape (or a structure). For any two nodes $A$ and $B$, there is two event blocks $x$ and $y$ which are in both $OPERA(A)$ and $OPERA(B)$. Also, their path between $x$ and $y$ in $OPERA(A)$ is not equal to that in $OPERA(B)$. For any two event blocks, if each node has different paths, we can consider that the difference is fork attacks. Based on Lemma~\ref{lem:fork}, if an attacker forks an event block, the OPERA chain can detect and remove it before the Clotho is generated. It contradicts our assumptions. For this reason, two nodes have consistent OPERA chain. 
\end{proof}

\begin{lem}
\label{lem:root}
For any root set $R$, all nodes nominate same root into Clotho.
\end{lem}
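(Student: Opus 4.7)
The plan is to reduce this claim to the structural consistency of the OPERA chain established in Theorem~\ref{thm:same}, together with the fact that the Clotho-selection rule is a deterministic predicate on the DAG. First I would fix an arbitrary root set $R$ in some frame $f_a$, and an arbitrary root $r \in R$. By the definition of Clotho, whether a node nominates $r$ as a Clotho depends only on (i) which roots lie in frames $f_{a+1}$ and $f_{a+3}$ of that node's local OPERA chain, and (ii) the Happened-Before relation between those roots and $r$. Both of these are purely combinatorial properties of the DAG.

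Next I would invoke Theorem~\ref{thm:same}, which states that every honest node's OPERA chain grows into the same shape: for any two nodes $A$ and $B$, once an event block $x$ appears in both $OPERA(A)$ and $OPERA(B)$, the set of ancestors of $x$ (and hence all Happened-Before relations into $x$) coincide in the two views. Consequently, once a root $r' \in f_{a+3}$ is present in both views, the set of frame-$f_{a+1}$ roots that Happened-Before $r'$ and that Happened-Before $r$ is identical in both views. Applying Lemma~\ref{lem:share} (Sharing), once some node sees a root in $f_{a+3}$, it knows that more than $2n/3$ roots of $f_{a+1}$ see $r$ through more than $2n/3$ nodes, so the Clotho predicate fires; by the shape-consistency argument, the same predicate fires for every other node once it has synchronised up to the corresponding $f_{a+3}$ root.

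The last step would be to note that the Clotho predicate is monotone in the DAG: adding further event blocks cannot remove Happened-Before edges, so the ``yes/no'' verdict for a given candidate root $r$ becomes stable as soon as a single frame-$f_{a+3}$ witness exists. Hence every node, as its OPERA chain grows, arrives at the same verdict for every $r \in R$, and therefore nominates precisely the same subset of $R$ as Clothos.

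The main obstacle I anticipate is handling the asynchrony carefully: different nodes may reach the ``$f_{a+3}$ witness'' condition at different real times, and a node might temporarily not yet know whether $r$ will be Clotho. So the lemma must be read as eventual agreement on Clotho nomination, and the careful point is to show that no two honest nodes can ever disagree on the verdict for $r$ once both have observed the relevant witnesses—this is where Theorem~\ref{thm:same} (ruling out fork-induced divergence via Lemma~\ref{lem:fork}) is doing the real work, since without it a Byzantine proposer could create two conflicting ancestry views of $r$ and cause split nominations.
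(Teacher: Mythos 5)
\begin{quote}
Your proposal is correct and follows essentially the same route as the paper: both reduce the claim to Theorem~\ref{thm:same} (all OPERA chains grow into the same shape) plus the observation that Clotho nomination is a deterministic function of that shape (the paper phrases this via equality of flag tables, you via the combinatorial Happened-Before predicate). Your additional remarks on monotonicity, the role of Lemma~\ref{lem:share}, and the eventual-agreement reading under asynchrony are elaborations the paper leaves implicit, not a different argument.
\end{quote}
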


\begin{proof}
Based on Theorem~\ref{thm:same}, each node nominates a root into Clotho via the flag table. If all nodes have an OPERA chain with same shape, the values in flag table should be equal to each other in OPERA chain. Thus, all nodes nominate the same root into Clotho since the OPERA chain of all nodes has same shape.
\end{proof}

\begin{lem}
\label{lem:resel}
In the Reselection algorithm, for any Clotho, a root in OPERA chain selects the same consensus time candidate.
\end{lem}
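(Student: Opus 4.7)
The plan is to derive this lemma by combining Theorem~\ref{thm:same} (all nodes grow into the same OPERA chain shape) with Lemma~\ref{lem:root} (all nodes nominate the same root as Clotho), and then showing that every ingredient consumed by the Reselection function is a pure function of the OPERA chain. Once that is established, determinism of the selection rule gives the conclusion.

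First I would fix a Clotho $c$ living in some frame $f_i$ and a root $r$ examined at frame $f_j$ during Atropos consensus time selection, and identify the triple of arguments passed to Reselection on line 14 of Algorithm~\ref{al:atc}: the root set $R$ defined as ``the set of roots in $f_{j-1}$ that $r$ can share,'' the Clotho $c$ itself, and the implicit values $r'.time(c)$ for $r' \in R$. The set $R$ is defined purely from the happened-before relation, the frame labels, and the identities of roots; by Theorem~\ref{thm:same} and Lemma~\ref{lem:root}, all of these quantities agree across nodes, so $R$ is the same set on every node, and $c$ is the same Clotho on every node.

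Next I would show, by induction on the frame distance $d = j - i$, that the value $r'.time(c)$ is identical across nodes for every root $r'$. The base case $d = 3$ falls out of line 10 of Algorithm~\ref{al:atc}: $r'.time(c)$ is set to $r'.lamport\_time$, a field stored inside the event block itself and therefore determined by the DAG. For the inductive step $d > 3$, the value is either (a) the result of a Reselection call on roots strictly earlier in the DAG, which is identical across nodes by the inductive hypothesis, or (b) the minimum over such earlier values in the minimum-selection frame branch on line 22. Either way the recursion unfolds on sub-OPERA-chains that are themselves identical across nodes. Finally, the body of Reselection is manifestly deterministic: it builds the multiset $\tau$, computes frequencies, and returns the smallest $t_i$ whose count equals $max\_count$, with no randomness or node-local data. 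Identical inputs therefore produce identical outputs.

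The step I expect to be the main obstacle is justifying that the ``share'' predicate defining $R$ (and implicit in computing $r'.time(c)$) is genuinely a function of the DAG alone, so that Theorem~\ref{thm:same} can be applied uniformly. This is where I would lean on Lemma~\ref{lem:fork} to rule out conflicting versions of the same ancestor and on Lemma~\ref{lem:share} to pin down which roots in earlier frames are actually visible from $r$. Once share is framed as a structural graph property, the inductive unwinding is routine and the lemma follows.
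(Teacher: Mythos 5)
Your proposal is correct and follows essentially the same route as the paper, whose entire proof is a one-sentence appeal to Theorem~\ref{thm:same} (identical OPERA chain shape) plus the determinism of the Reselection rule. Your version merely fills in the details the paper leaves implicit---the induction on frame distance for $r'.time(c)$ and the argument that the ``share'' predicate is a structural graph property---which is a more careful rendering of the same idea rather than a different approach.
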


\begin{proof}
Based on Theorem~\ref{thm:same}, if all nodes have an OPERA chain with the same partial shape, a root in OPERA chain selects the same consensus time candidate by the Reselection algorithm.
\end{proof}

\begin{thm}
\label{thm:ct}
Lachesis consensus algorithm guarantees to reach agreement for the consensus time.
\end{thm}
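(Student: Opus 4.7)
The plan is to separate agreement (every honest node that sets $c.consensus\_time$ assigns it the same value) from termination (that assignment occurs in a bounded number of frames). Agreement follows almost immediately from the structural results already proved; termination is the substantive step and must rest on the minimum-selection frames of Algorithm~\ref{al:atc}.

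For agreement, I would observe that by Theorem~\ref{thm:same} every honest node eventually sees the same OPERA chain, by Lemma~\ref{lem:root} they nominate the same Clotho $c$, and by Lemma~\ref{lem:resel} they evaluate \textsc{Reselection} identically at every root of every frame. Since the remaining lines of Algorithm~\ref{al:atc} are deterministic functions of the shared DAG (assigning Lamport timestamps, counting occurrences, selecting a minimum), every honest node computes the same value $r.time(c)$ at every root $r$ of every frame $f_{i+d}$. Hence, whichever branch of the algorithm fires the assignment $c.consensus\_time \leftarrow t$ fires simultaneously on all honest nodes with the same $t$, which is exactly agreement.

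For termination, I would show that the number of distinct candidate times carried by the roots of frame $f_{i+d}$ eventually collapses so that one value is shared by more than $2n/3$ of them. Two mechanisms drive this. First, by the quorum-intersection property used in Propositions~\ref{prop:seen} and~\ref{prop:share}, once more than $2n/3$ of the roots of frame $f_{i+d-1}$ carry a common value $t$, every reselection set $s$ in frame $f_{i+d}$ contains $t$ as the unique plurality, so \textsc{Reselection} returns $t$ at every root, triggering the $k>2n/3$ branch and freezing $c.consensus\_time$. Second, at each minimum-selection frame ($d \equiv 0 \pmod h$), every root overwrites its candidate with the minimum of its reselection set, which bounds the active candidate multiset from above and prevents arbitrary drift. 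Iterating both mechanisms yields a monotone decrease in the number of distinct candidate times across successive minimum frames, which together with the first mechanism yields a super-majority in finitely many frames.

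The hard part will be making the termination argument fully rigorous under asynchrony: an adversarial schedule could, a priori, keep introducing new roots whose Lamport timestamps are strictly smaller than the current minimum, resetting the candidate distribution at every minimum frame. I would close this gap probabilistically in the spirit of the FLP-avoidance invoked in the introduction: under the standing assumption that Byzantine nodes number at most $n/3$ and message delays are finite, the event that no root in the next frame brings a strictly smaller timestamp has positive probability in every round, so with probability one the minimum stabilises after finitely many minimum frames. Combined with the agreement paragraph, this gives $c.consensus\_time$ being assigned the same value at every honest node in finite time, which proves the theorem.
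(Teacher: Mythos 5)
Your overall skeleton matches the paper's: agreement is reduced to the fact that all nodes hold the same OPERA chain (Theorem~\ref{thm:same}, Lemmas~\ref{lem:root} and~\ref{lem:resel}), and termination is attributed to the periodic minimum-selection frames that the paper introduces precisely to sidestep the FLP obstruction it cites. The paper's own proof is no more than this sketch, so your agreement paragraph is fine and in fact spells out what the paper leaves implicit.

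Your termination argument, however, contains a concrete quantitative error and a patch aimed at the wrong problem. First, quorum intersection does not give what you claim: if more than $2n/3$ of the roots of frame $f_{i+d-1}$ carry $t$ and a root of frame $f_{i+d}$ sees a subset $s$ with $|s|>2n/3$, then the number of roots in $s$ carrying $t$ exceeds $2n/3+2n/3-n=n/3$ --- a strict majority of $s$, which does make \textsc{Reselection} return $t$, but does not yield $k>2n/3$, and $k>2n/3$ is the condition Algorithm~\ref{al:atc} actually tests before assigning $c.consensus\_time$. A supermajority in one frame therefore does not ``freeze'' the decision in the next; you would need near-unanimity. Second, the adversarial scenario you try to exclude probabilistically --- new roots injecting ever-smaller Lamport timestamps --- cannot occur: candidate times enter the pool only at $d=3$, and both \textsc{Reselection} and the minimum-selection rule choose among values already present, so the candidate set is finite and fixed thereafter. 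The genuine open question is whether plurality-reselection interleaved with periodic minimum-selection over that fixed finite set is guaranteed (or guaranteed with probability one, under a stated probability model) to reach the $k>2n/3$ threshold; neither your argument nor the paper's establishes this, and your probability-one claim is asserted without specifying where the randomness comes from.
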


\begin{proof}
For any root set $R$ in the frame $f_{i}$, time consensus algorithm checks whether more than 2n/3 roots in the frame $f_{i-1}$ selects the same value. However, each node selects one of the values collected from the root set in the previous frame by the time consensus algorithm and Reselection process. Based on the Reselection process, the time consensus algorithm can reach agreement. However, there is a possibility that consensus time candidate does not reach agreement~\cite{Fischer85}. To solve this problem, time consensus algorithm includes minimal selection frame per next $h$ frame. In minimal value selection algorithm, each root selects minimum value among values collected from previous root set. Thus, the consensus time reaches consensus by time consensus algorithm.
\end{proof}

\begin{thm}
\label{thm:bft}
If the number of reliable nodes is more than $2n/3$, event blocks created by reliable nodes must be assigned to consensus order.
\end{thm}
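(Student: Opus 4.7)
The plan is to chain together the earlier structural results (Theorem~\ref{thm:same}, Lemma~\ref{lem:root}, Lemma~\ref{lem:resel}, Theorem~\ref{thm:ct}) with a liveness argument showing that any event block created by a reliable node is eventually an ancestor of some Atropos, and therefore receives a position in the topological consensus order defined earlier in the paper. So the proof splits naturally into two parts: (a) every event block $e$ created by a reliable node is eventually Happened-Before some Atropos, and (b) once this holds, all nodes agree on the same consensus position for $e$.

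For part (a), I would first observe that because more than $2n/3$ nodes are reliable, the peer-selection and event-creation rules of the Lachesis protocol guarantee that a reliable node's top event block is eventually referenced by some other reliable node's newly created event block (otherwise that reliable node would be lazy, contradicting reliability). Iterating this, within a bounded number of frames any event block $e$ by a reliable node becomes an ancestor of more than $2n/3$ event blocks in some later frame $f_j$, and in particular of more than $2n/3$ roots in $f_j$. By Lemma~\ref{lem:share} (Sharing), three frames later there is a root $r$ in frame $f_{j+3}$ that witnesses this fact, so $r$ nominates a root $r'$ (with $e$ as an ancestor) as a Clotho. Then Theorem~\ref{thm:ct} guarantees that the consensus time of this Clotho is eventually agreed upon, promoting it to an Atropos $A$ with $e$ as an ancestor.

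For part (b), given such an Atropos $A$, every reliable node's local OPERA chain contains $A$ with the same Happened-Before structure by Theorem~\ref{thm:same}, and by Lemma~\ref{lem:root} and Lemma~\ref{lem:resel} the Clotho-to-Atropos assignment and consensus time are identical at every reliable node. The topological consensus ordering rules (smallest Atropos consensus time, then Lamport timestamp, then hash tie-break) are deterministic functions of information contained in the (common) OPERA chain, so every reliable node assigns $e$ the same consensus position relative to $A$. Hence $e$ receives a well-defined consensus order.

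The hard part will be pinning down the liveness claim in part (a): formally, that no reliable event block can be starved from ever reaching a root, and hence an Atropos, under asynchrony and up to $n/3$ Byzantine peers. The paper's peer-selection and cost-function arguments make this plausible (a reliable node's in-degree grows with its height so it is eventually chosen by other reliable nodes), but turning this into a quantitative bound on the number of frames needed---and ruling out pathological asynchronous schedules where a reliable block is perpetually ignored by the $>2n/3$ reliable majority---is the step that requires the most care. The Byzantine-resistance of the root/Clotho/Atropos pipeline (parts of Lemma~\ref{lem:fork} and Theorems~\ref{thm:same}--\ref{thm:ct}) then closes the argument once this liveness step is in hand.
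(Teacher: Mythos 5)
Your part (a) is essentially the paper's first paragraph, done more carefully: the paper simply asserts that because reliable nodes ``communicate with every other node continuously'' the block $x$ is shared with a root $y$, and then invokes Proposition~\ref{prop:seen} to push $y$ through the Clotho/Atropos pipeline. Your concern about pathological asynchronous schedules starving a reliable block is legitimate and is \emph{not} resolved by the paper either -- the paper states the liveness step as an unargued assertion -- so on that half you are at least as rigorous as the source, just via Lemma~\ref{lem:share} and Theorem~\ref{thm:ct} rather than Proposition~\ref{prop:seen} directly.

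The genuine gap is in your part (b). You prove \emph{agreement}: every reliable node computes the same position for $e$. That is not what the paper's second half proves, and it is not sufficient for the theorem as the paper understands it. The paper's second paragraph is a \emph{finality / irreversibility} argument: once $x$ has been assigned consensus time $t$, no later-arriving event block (e.g., one injected by a parasite chain) can ever be assigned a consensus time earlier than $t$. The reason is structural: to obtain an earlier consensus time a new block would need to be Happened-Before a root $r$ in frame $f_i$ \emph{and} more than $2n/3$ roots in frame $f_{i+1}$ would need to have $r$ as an ancestor; but those $f_{i+1}$ roots already exist and are immutable, so the second condition can never be met retroactively. Without this step you have shown only that $e$ receives \emph{some} agreed position at \emph{some} moment, not that this position is stable -- an adversary could in principle keep shifting $e$ backward in the total order by inserting blocks ahead of it, and then ``assigned to consensus order'' would never actually terminate. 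The agreement claim you argue in part (b) is already covered by Theorem~\ref{thm:same}, Lemma~\ref{lem:root} and Lemma~\ref{lem:resel} and does not need to be re-proved here; the immutability of the already-created frame $f_{i+1}$ root set is the missing idea you need to add.
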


\begin{proof}
In OPERA chain, since reliable nodes try to create event blocks by communicating with every other nodes continuously, reliable nodes will share the event block $x$ with each other. Based on Proposition~\ref{prop:seen}, if a root $y$ in the frame $f_{i}$ Happened-Before event block $x$ and more than 2n/3 roots in the frame $f_{i+1}$ Happened-Before the root $y$, the root $y$ will be nominated as Clotho and Atropos. Thus, event block $x$ and root $y$ will be assigned consensus time $t$. 

For an event block, assigning consensus time means that the validated event block is shared by more than 2n/3 nodes. Therefore, malicious node cannot try to attack after the event blocks are assigned consensus time. When the event block $x$ has consensus time $t$, it cannot occur to discover new event blocks with earlier consensus time than $t$.
There are two conditions to be assigned consensus time earlier than $t$ for new event blocks. First, a root $r$ in the frame $f_{i}$ should be able to share new event blocks. Second, the more than 2n/3 roots in the frame $f_{i+1}$ should be able to share $r$. Even if the first condition is satisfied by malicious nodes (e.g., parasite chain),
the second condition cannot be satisfied since at least 2n/3 roots in the frame $f_{i+1}$ are already created and cannot be changed. Therefore, after an event block is validated, new event blocks should not be participate earlier consensus time to OPERA chain. 
\end{proof}

\subsection{Response to Attacks}\label{se:ra}
Like all other decentralized blockchain technologies, OPERA chain will likely be subject to attacks by attackers which aim to gain financial profit to damage the system. Here we describe several possible attack scenarios and how the OPERA chain intends to take preventive measures.

\subsubsection{Transaction Flooding}
A malicious participant may run a large number of valid transactions from their account under their control with the purpose of overloading the network. In order to prevent such a case, the chain intends to impose a minimal transaction fee. Since there is a transaction fee, the malicious user cannot continue to perform such attacks. Participants who participate in nodes are rewarded, and those who contribute to the ecosystem, such as by running transactions, are continuously rewarded. Such rewards are expected to be adequate in running transactions for appropriate purposes. However, since it would require tremendous cost to perform abnormal attacks, it would be difficult for a malicious attacker to create transaction flooding.

\subsubsection{Parasite chain attack}
In a DAG-based protocol, a parasite chain can be made with a malicious purpose, attempting connection by making it look like a legitimate event block. When the Main Chain is created, verification for each event block is performed. In the verification process, any event block that is not connected to the Main Chain is deemed to be invalid and is ignored, as in the case of double spending.

We suppose that less than one-third of nodes are malicious. The malicious nodes create a parasite chain. By the root definition, roots are nominated by 2n/3 node awareness. A parasite chain is only shared with malicious nodes that are less than one-third of participating nodes. A parasite chain is unable to generate roots and have a shared consensus time.

\subsubsection{Double Spending}
A double spend attack is when a malicious entity attempts to spend their funds twice. Entity $A$ has 10 tokens, they send 10 tokens to $B$ via node $n_A$ and 10 tokens to $C$ via node $n_Z$. Both node $n_A$ and node $n_Z$ agree that the transaction is valid, since $A$ has the funds to send to $B$ (according to $n_A$) and $C$ (according to $n_Z$).

Consensus is a mechanism whereby multiple distributed parties can reach agreement on the order and state of a sequence of events. Let’s consider the following 3 transactions:

- Transaction $tx_A$: $A$ (starting balance of 10) transfers 10 to $B$

- Transaction $tx_B$: $B$ (starting balance of 0) transfers 10 to $C$

- Transaction $tx_C$: $C$ (starting balance of 0) transfers 10 to $D$

We consider Node $n_A$ received the order $tx_A$ $tx_B$ $tx_C$.

The state of Node $n_A$ is $A:0$, $B:0$, $C:0$, $D:10$

Now, we consider Node $n_B$ that receives the order $tx_C$ $tx_B$ $tx_A$.

The state of Node $n_B$ is $A:0$, $B:10$, $C:0$, $D:0$

Consensus ordering gives us a sequence of events.

If the pair of event blocks $(x, y)$ has a double spending transaction, the chain can structurally detect the double spend and delay action for the event blocks until the event blocks assign time ordering.

Suppose that the pair of event blocks $(x, y)$ has same frame $f_1$. Then, all nodes must detect two event blocks before frame $f$+$2$. By the root definition, each root happened-before more than $2n/3$ previous roots. For this reason, when two roots in $f$+$1$ are selected, they must have happened-before the roots which are more than one-thirds of roots in $f$. This means that more than $2n/3$ roots in $f$+$1$ share both two roots which include the pair respectively. With the root definition and previous explanation, all roots in $f$+$2$ share both the pairs. Thus, all nodes detect the double spending event blocks at $f$+$2$ or earlier.

\subsubsection{Long-range attack}
In blockchains an adversary can create another chain. If this chain is longer than the original, the network will accept the longer chain. This mechanism exists to identify which chain has had more work (or stake) involved in its creation.

$2n/3$ participating nodes are required to create a new chain. To accomplish a long-range attack you would first need to create $>$ $2n/3$ participating malicious nodes to create the new chain.

\subsubsection{Bribery attack}
An adversary could bribe nodes to validate conflicting transactions. Since $2n/3$ participating nodes are required, this would require the adversary to bribe $>$ $n/3$ of all nodes to begin a bribery attack.

\subsubsection{Denial of Service}
LCA is a leaderless system requiring $2n/3$ participation. An adversary would have to deny $>$ $n/3$ participants to be able to successfully mount a DDoS attack.

\subsubsection{Sybil}
Each participating node must stake a minimum amount of FTM to participate in the network. Being able to stake $2n/3$ total stake would be prohibitively expensive.

\clearpage
\section{Reference}\label{se:ref}

\renewcommand\refname{\vskip -1cm}
\bibliographystyle{unsrt}
\bibliography{LCA}

\end{document}